\theoremstyle{plain}
\begin{document}

\doparttoc % Tell to minitoc to generate a toc for the parts
%\faketableofcontents % Run a fake tableofcontents command for the partocs

%\part{} % Start the document part

\usetikzlibrary{arrows}

% If your paper is accepted and the title of your paper is very long,
% the style will print as headings an error message. Use the following
% command to supply a shorter title of your paper so that it can be
% used as headings.
%
\runningtitle{Benchmarking Observational Studies with Experimental Data under Right-Censoring}

% If your paper is accepted and the number of authors is large, the
% style will print as headings an error message. Use the following
% command to supply a shorter version of the authors names so that
% they can be used as headings (for example, use only the surnames)
%
\runningauthor{Demirel, De Brouwer, Hussain, Oberst, Philippakis, Sontag}

\twocolumn[

\aistatstitle{Benchmarking Observational Studies with \\ Experimental Data under Right-Censoring}

\aistatsauthor{Ilker Demirel \And Edward De Brouwer \And  Zeshan  Hussain}

\aistatsaddress{MIT \And  Yale University \And MIT}

\aistatsauthor{Michael Oberst \And Anthony Philippakis \And  David Sontag}

\aistatsaddress{Carnegie Mellon University \And  Broad Institute of MIT and Harvard \And MIT}
]

\begin{abstract}
    Drawing causal inferences from observational studies (OS) requires unverifiable validity assumptions; however, one can {\em falsify} those assumptions by benchmarking the OS with experimental data from a randomized controlled trial (RCT). A major limitation of existing procedures is not accounting for {\em censoring}, despite the abundance of RCTs and OSes that report right-censored time-to-event outcomes. We consider two cases where censoring time (1) is independent of time-to-event and (2) depends on time-to-event the same way in OS and RCT. For the former, we adopt a censoring-doubly-robust signal for the conditional average treatment effect (CATE) to facilitate an equivalence test of CATEs in OS and RCT, which serves as a proxy for testing if the validity assumptions hold. For the latter, we show that the same test can still be used even though unbiased CATE estimation may not be possible. We verify the effectiveness of our censoring-aware tests via semi-synthetic experiments and analyze RCT and OS data from the Women's Health Initiative study.
\end{abstract}

\section{INTRODUCTION}
The ability to reliably establish causal relationships is essential for decision-making and policy development \citep{pearl2018book, angrist2009mostly, Hernan2021-yd}. Although experimental data, often collected through randomized controlled trials (RCT), are considered to be the \enquote{gold-standard} for inferring causality, real-world evidence collected from observational (non-experimental) data is increasingly guiding regulatory processes \citep{hansford2023reporting, govc2019, au2022, nice2022uk}. Indeed, observational data, such as claims and electronic health records, can provide large-scale, diverse, and longitudinal data at low cost, making it a promising complement to time and cost-intensive experimental data. 

Ideally, one would like to leverage observational studies (OS) when experimental data is unavailable or provides limited evidence \citep{dagan2021bnt162b2, gershman2018using}. For instance, people with a history of cardiovascular diseases may not be eligible to participate in an RCT. Therefore, OSes are the only source of data for those people. Furthermore, the limited sample size of RCTs makes subgroup analysis infeasible and their results do not always apply to a {\em target} population of interest \citep{rothwell2005external, stuart2011use, Hartman2015-td, colnet2020causal}. Contrary to the RCTs, however, OSes are susceptible to numerous sources of bias, bringing their utility in practice under question. Therefore, it is critical to evaluate the credibility of an OS before using it for different downstream tasks \citep{yang2023elastic}. To that end, we will develop a hypothesis test to check if the findings from an OS and an RCT are compatible within the trial-eligible population, that can be used when the outcomes are {\em right-censored} \citep{kalbfleisch2011statistical}.

Naive analysis of an OS can lead to biased effect estimates due to various reasons. Among those, unobserved confounding--- which makes prognostic factors systemically differ in treatment and control groups--- typically receives the most attention. However, the bias may also emerge due to the poor analysis of the data regarding the handling of censored outcomes and non-adherence to treatment assignments \citep{hernan2017per}, the definition of time-zero and follow-up \citep{lodi2019effect, hernan2008observational}, and different types of {\em selection bias} \citep{hernan2004structural, yadav2021immortal}. Target trial emulation (TTE), where one uses observational data to emulate a hypothetical trial, has emerged as a popular framework to limit the bias in the OSes \citep{hernan2016using, wang2023emulation}. 

With a well-specified TTE protocol, it is possible to estimate causal effects from observational data under well-known {\em internal} and {\em external} validity assumptions (given in \Cref{sec:notset}) \citep{imbens2015causal, wager2018estimation, semenova2021debiased}. Internal validity ensures that the causal effects can be reliably inferred in the {\em OS population}. External validity further allows transporting those effect estimates to different populations ({\em e.g.}, the RCT population) \citep{dahabreh2020extending}. Even though these assumptions are not {\em verifiable}, one can still {\em falsify} them by benchmarking the OS to an RCT \citep{dahabreh2020benchmarking, forbes2020benchmarking}. The key idea is to formulate and test a null hypothesis that captures the implications of those assumptions, which is the equivalence of the treatment effects inferred from the OS and the RCT. The rejection of the null would then be linked to the violation of (a subset of) those assumptions.

Recent works have developed tests for falsifying the internal and external validity assumptions. \citet{hussain2022falsification} proposed an algorithm to first compare the {\em group-level} effects derived from an RCT and multiple OSes in pre-specified groups and integrate the evidence only from the OSes compatible with the RCT. \citet{hussain2023falsification} developed a falsification framework that compares {\em individual-level} effect estimates from an RCT and an OS over the entire covariate space and {\em automatically} detects the regions of disparity, providing explanations in the form of witness functions. \citet{de2023hidden} adopt an alternative view and focus on quantifying the hidden confounding in an OS from a \enquote{sensitivity analysis} perspective instead of testing for the equivalences of the effect estimates across studies. \citet{karlsson2024detecting} show how one can detect hidden confounding when there is no RCT data but {\em multiple} OSes that share a data-generating graph with certain properties. None of the studies above consider censored observations.
\paragraph{Our Contributions}
Censoring due to drop-outs or loss-to-follow-ups is a common issue that plagues both OSes and RCTs. Improperly handling the censored data does not merely lead to a suboptimal falsification test for the validity assumptions but renders the test unreliable since censoring can easily introduce bias. We generalize the test in \citet{hussain2023falsification} to cases with right-censored time-to-event outcomes. We first consider in \Cref{ssec:fwic} the common scenario where the censoring time is conditionally independent of the time-to-event. In \Cref{ssec:fwgc}, we introduce a novel censoring concept, {\em global censoring}, where the censoring time depends on time-to-event ({\em e.g.}, drop-out due to disease progression), but in the same way in the RCT and the OS. We develop a falsification test under both censoring mechanisms and verify its effectiveness in semi-synthetic experiments with the Infant Health and Development Program cohort. We also analyze real-world RCT and OS data from the Women's Health Initiative. \Cref{fig:1} gives an overview of our results.
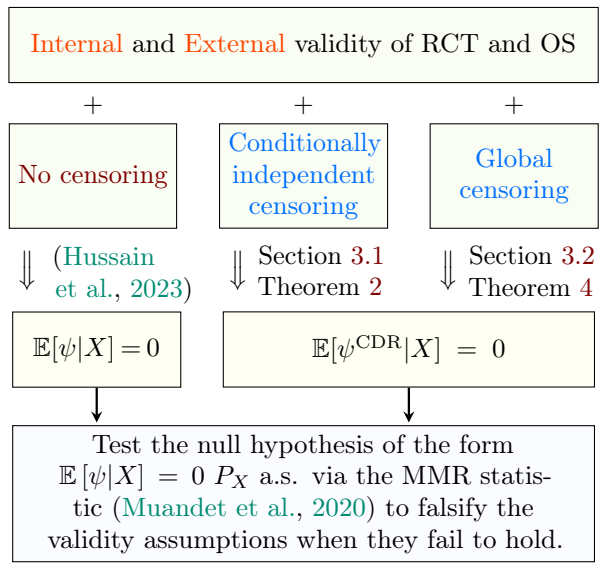
\begin{figure}[t]
\centering
\begin{tikzpicture}[
    box/.style={rectangle, draw, minimum size=1cm, align=center},
    arr/.style={->, >=stealth, shorten >=1pt, thick},
    node distance=1cm
]
% Column 1
\node[box, text width=7.6cm, fill=mygreen!10, minimum width=1cm] (box1-1) {\hyperref[asm:iva]{\myorange{Internal}} and \hyperref[asm:evo]{\myorange{External}} validity of RCT and OS};

\node[yshift = -.8cm, xshift=-2.8cm] (plus-1) {$+$};

\node[below=of plus-1, box, yshift = 1cm, text width=1.99cm, minimum height=1.39cm, fill=mygreen!10] (box3-1) {\maroon{No censoring}};

\node[xshift=-3.7cm, yshift=-3cm] (implies-1) {\rotatebox{270}{$\implies$}};
\node[below=of box3-1, yshift=0.9cm, xshift=0.5cm, text width=2cm] (impt-1) {\citep{hussain2023falsification}};

\node[below=of box3-1, draw, box, align=center, text width=2cm, xshift=0.05cm, yshift=-.1cm, fill=mypink!10] (box4-1) {$\!\mathbb{E} [ \psi \lvert X]\!=\!0\!$};

% Column 2

\node[yshift = -.8cm, xshift=0cm] (plus-2) {$+$};

\node[below=of plus-2, box, yshift = 1cm, text width=2cm, fill=mygreen!10] (box3-2) {\myorange{\hyperref[asm:cic]{\cblue{Conditionally}}} \\ \hyperref[asm:cic]{\cblue{independent}} \\  \hyperref[asm:cic]{\cblue{censoring}}};

\node[xshift=-0.9cm, yshift=-3cm] (implies-2) {\rotatebox{270}{$\implies$}};
\node[below=of box3-2, yshift=0.9cm, xshift=0.4cm, text width=2cm] (impt-1) {\Cref{ssec:fwic}\\ \Cref{thm:prethmMMR}};

\node[below=of box3-2, draw, box, align=center, text width=4.7cm, yshift=-.1cm, xshift=1.4cm, fill=mypink!10] (box4-2) {$\!\mathbb{E} [ \psi^{\tn{CDR}} \lvert X]\!=\!0\!$};

\node[yshift = -.8cm, xshift=2.8cm] (plus-3) {$+$};

\node[below=of plus-3, box, yshift = 1cm, text width=2cm, minimum height=1.39cm, fill=mygreen!10] (box3-3) {\mgray{\hyperref[asm:gc]{\cblue{Global}}} \\ \mgray{\hyperref[asm:gc]{\cblue{censoring}}}};

\node[xshift=1.9cm, yshift=-3cm] (implies-3) {\rotatebox{270}{$\implies$}};
\node[below=of box3-3, yshift=0.9cm, xshift=0.4cm, text width=2cm] (impt-1) {\Cref{ssec:fwgc}\\ \Cref{thm:gcThm}};

\node[below=of box4-2, box,xshift=-1.4cm, yshift =0.5cm, text width=7.5cm, fill=myblue!10] (box5-1) {Test the null hypothesis of the form $\E{\psi \lvert X} = 0$ $P_X$ a.s. via the MMR statistic \citep{muandet2020kernel} to falsify the validity assumptions when they fail to hold.};

\draw[arr, line width=0.3mm] (box4-2.south) -- (box4-2.south |- box5-1.north);
\draw[arr, line width=0.3mm] (box4-1.south) -- (box4-1.south |- box5-1.north);

\end{tikzpicture}
\caption{Prior work develops a maximum moment restriction (MMR)-based falsification test for validity assumptions, under no censoring. We extend the test to the case where time-to-event outcomes are right-censored, considering two censoring mechanisms.}
\label{fig:1}
\rule{\linewidth}{.75pt} 
\end{figure}
\section{NOTATION AND BACKGROUND} \label{sec:notset}
\paragraph{Notation} 
Let $A \in \cbrc{0,1}$ and $Y \in \mathbb{R}_+$ denote the binary treatment assignment and time-to-event outcome. We use $Y(a)$ to refer to the {\em potential outcome} for treatment $A = a$ and use $S$ as the study indicator with $S=0$ reserved for the RCT and $S=1$ for the OS. We denote by ${\cal I}_0$ and ${\cal I}_1$ the set of patients for the RCT and the OS, respectively, and let ${\cal I} \coloneqq {\cal I}_0 \cup {\cal I}_1$. We denote the cardinality of a set by $|{\cal I}|$ and let $|{\cal I}_0| = n_0$, $|{\cal I}_1| = n_1$, and $|{\cal I}| = n$, where $n = n_0 + n_1$.

We denote the set of patient covariates by $X$ and define ${\cal X}$ as the space of trial-eligible patients, that is,
\begin{equation*} 
    \prb{S = 0 \mid X = x} > 0, \qquad \forall x \in {\cal X}.
\end{equation*}

\paragraph{Internal and External Validity}
We first introduce conditional average treatment effect (CATE) estimation without censoring. For a patient with covariates $x \in {\cal X}$ in study $S = s$, the CATE is defined as the expected difference between the \textit{potential outcomes} with and without treatment:
\begin{align}
    \tn{CATE} (x, s) \coloneqq \E{Y(1) - Y(0) \lvert X=x, S = s}.
    \label{eq:cate}
\end{align}
The fundamental challenge in causal inference is that the potential outcomes $Y(0)$ and $Y(1)$ are never observed together. For a patient in the control group ({\em i.e.}, $A=0$) we observe $Y(0)$ but not $Y(1)$, and vice versa for the treatment group. Nevertheless, one can still estimate the CATE in a given study $S = s$ under certain internal validity assumptions listed below. 

\begin{restatable}[\textit{Internal validity}]{dfn}{ivr}
\label{dfn:ivr}
We say that internal validity holds for a study $S = s$ if the following conditions hold $\forall a \in \cbrc{0,1}$ and $\forall x \in {\cal X}$.
    \hspace{-1pt}
    \begin{itemize}[leftmargin=*]
        \item \textit{No unobserved confounding ---}  
            $Y(a) \indep A \mid X, S=s$.
        \item \textit{Consistency ---} 
            $A=a \implies Y = Y(a)$.
        \item  \textit{Positivity of treatment assignment}
        \begin{equation*}
        \prb{A=a \mid X=x, S=s} > 0.
        \end{equation*}
    \end{itemize}
\end{restatable}

\begin{restatable}[]{asm}{ivr}
    \label{asm:iva}
    Internal validity (\cref{dfn:ivr}) holds for both the RCT $S=0$ and the OS $S=1$.
\end{restatable}

\Cref{asm:iva} allows to identify the CATE in \eqref{eq:cate} as a quantity that can be estimated from data ({\em i.e.}, an estimand) in the RCT population and the OS population {\em separately}. However, even when the internal validity holds, the CATE in two populations can differ due to different distribution of treatment effect modifiers that are not included in $X$. To generalize the CATE from one study to the other, one needs external validity that assumes away unobserved treatment effect modifiers \citep{dahabreh2019, dahabreh2020extending}.

\begin{restatable}[\textit{External validity}]{asm}{evo}
    \label{asm:evo}
    We have, $\forall a \in \cbrc{0,1}$ and $\forall x \in {\cal X}$,
    \begin{itemize}[leftmargin=*]
        \item \textit{Ignorability of selection ---} $Y(a) \indep S \mid X$.
        \item \textit{Positivity of selection ---} $ \prb{S = 1 \mid X = x} > 0$.
    \end{itemize}
\end{restatable}

\paragraph{Falsification Without Censoring}
\cref{asm:iva} implies that $\tn{CATE}(x, 1)$ and $\tn{CATE}(x, 0)$ can be estimated, and \cref{asm:evo} implies that $\tn{CATE}(x, 1) = \tn{CATE}(x, 0)$. Intuitively, disagreement between the estimated CATE functions from each study implies that one or more of these assumptions are violated. Testing this equivalence forms the basis of the maximum moment restriction (MMR)-based falsification framework proposed in \citet{hussain2023falsification} in the {\em absence of censoring}. The core technical idea is to relate the equivalence of the underlying CATE functions to a set of conditional moment restrictions, finding an \enquote{instance-wise signal} $\psi$ such that Assumptions~\ref{asm:iva}-\ref{asm:evo} imply $\E{\psi \mid X} = 0$ almost surely. This reduction allows for applying the recent advances in the testing of MMRs via kernel methods \citep{muandet2020kernel}.
\section{CENSORED FALSIFICATION} \label{sec:prelim}
This work considers the common scenario where the time-to-event outcome $Y \in \mathbb{R}_+$ is subject to right-censoring. Let $C \in \mathbb{R}_+$ be the censoring time. For a patient, we observe $(X, A, S, \tilde{Y}, \Delta)$ where
\begin{equation} \label{eq:ytilde_def}
    \tilde{Y} = \min \para{Y,C}, \qquad \Delta = \ind{Y \leq C}.
\end{equation}
We either observe the time-to-event or the censoring time, indicated by $\Delta$. $\Delta = 1$ means that the time-to-event $Y$ is observed, but not the censoring time $C$, and vice versa for $\Delta = 0$. Censoring introduces an additional identification problem as neither of the potential outcomes is observed for censored patients. Without any assumptions on the censoring mechanism, unbiased CATE estimation is not possible in either study, rendering prior falsification approaches ineffective.

In this section, we generalize the falsification framework in \citet{hussain2023falsification} under two different censoring conditions. In \Cref{ssec:fwic}, we assume the censoring time $C$ is independent of the time-to-event $Y$ after conditioning on covariates $X$. We derive unbiased instance-wise signals for CATE in the RCT and the OS, suitable for comparison via an MMR test. Note that the falsification could be due to violating the censoring assumption, even though the validity assumptions hold. Remarkably, even when the censoring is not conditionally independent of time-to-event, and therefore unbiased CATE estimation is infeasible, testing the validity assumptions using the same signals may still be possible. The key ingredient is an alternative assumption that the censoring mechanism is identical across studies (\cref{ssec:fwgc}).
\subsection{Falsification with Conditionally Independent Censoring} \label{ssec:fwic}
We start by recalling the conditionally independent censoring condition, which is common in survival analysis~\citep{kalbfleisch2011statistical}.

\begin{restatable}[\textit{Conditionally independent censoring}]{asm}{cic} \label{asm:cic}
\[
    Y \indep C \mid X, A, S.
\]
\end{restatable} 

We show that under conditionally independent censoring, one can adapt the doubly-robust censoring-unbiased estimator in \citet{rubin2007doubly} to form conditional moment restrictions (CMR) that enables us to use an MMR-based test \citep{muandet2020kernel} for falsifying the validity assumptions (\ref{asm:iva}-\ref{asm:evo}). Precisely speaking, we derive instance-wise signals $\psi_{s}$ such that $\E{\psi_{s} \mid X} = \tn{CATE} (X,s)$, allowing us to define a difference signal $\psi = \psi_{1} - \psi_{0}$ such that $\E{\psi \mid X} = 0$ almost surely in $P_X$. We can then use the preceding equality as our null hypothesis, whose rejection would imply the violation of the validity assumptions, and use an MMR-based test similar to that of \citet{hussain2023falsification} (see~\cref{thm:thmMMR}).

By $G$ and $F$, we denote the cumulative distribution functions of the censoring time $C$ and time-to-event outcome $Y$, respectively. We use $\bar{G}$ and $\bar{F}$ to denote their {\em survival} functions ($\bar{G} (t) = 1 - G(t)$). For conciseness, we use the following notation
\begin{align}
    \bar{G}_{s,a}(t \mid X) \coloneqq \prb{C > t \mid X,S=s,A=a}. \label{eq:gbar}\\
    \bar{F}_{s,a}(t \mid X) \coloneqq \prb{Y > t \mid X,S=s,A=a}. \label{eq:fbar}
\end{align}
We further assume that for any realizable time-to-event, there is a nonzero probability of being observed, to prevent censoring-related identifiability issues.

\begin{restatable}[\textit{Support under censoring}]{asm}{ooo}
    \label{asm:ooo}
    $\forall t \in \mathbb{R}_+$,  $\forall a,s \in \cbrc{0,1}$, the following holds almost surely in $P_X$\footnote{All (in)equalities involving random variables hold almost surely throughout the manuscript.}    
    \[
        \bar{F}_{s,a}(t \mid X) > 0 \implies \bar{G}_{s,a}(t \mid X) > 0.
    \]
\end{restatable}

Combined with internal validity in \Cref{asm:iva}, Assumptions \ref{asm:cic} and \ref{asm:ooo} make it possible to have unbiased estimates of the CATE (see \eqref{eq:cate}) in the RCT and the OS populations in the presence of right-censoring.

We start by writing the censoring-unbiased signal from \citet{rubin2007doubly} for a study $s$ and treatment group $a$ pair.
\begin{align}
    \psi^{*}_{s, a} &= \frac{\ind{\Delta=1} Y}{\bar{G}_{s,a}(Y \mid X)} + \frac{\ind{\Delta=0} Q_{s, a}(X,C)}{\bar{G}_{s,a}(C \mid X)} \nonumber \\
    & \quad - \int_{-\infty}^{\tilde{Y}} \frac{Q_{s, a}(X, c)}{\bar{G}_{s,a}^2(c \mid X)} dG_{s,a}(c \mid X), \label{eq:def_rubin_signal}
\end{align}
where $Q_{s, a}(X, C) = \E{Y \lvert X, Y>C, S=s, A=a}$. $\psi^*_{s, a}$ is \enquote{doubly-robust} in the sense that it is unbiased for the time-to-event outcome $Y$ if either $\bar{G}_{s,a} (t|X)$ or $\bar{F}_{s,a} (t|X)$ is correctly estimated. Computing $\psi^*_{s,a}$ requires estimating the survival function of the censoring time, $\bar{G}_{s, a}(t \lvert X)$, and of the time-to-event outcome, $\bar{F}_{s, a}(t \lvert X)$. Using the estimate for $\bar{F}_{s, a}(t \lvert X)$, one can also calculate $Q_{s, a}(X, C)$ by integration under conditionally independent censoring. The practitioner may use covariate-adjusted Kaplan-Meier estimators or the Cox proportional hazards (CoxPH) framework to model the effect of covariates on the time-to-event along with recent advances in survival modeling \citep{van2001estimation, cole2004adjusted, cox1972regression, chapfuwa2021enabling, curth2021survite}. We adopt a CoxPH model and provide the details in \cref{sec:semiexp}. 

\begin{restatable}[]{lem}{lemDR}
    \label{lem:lemDR}
    Suppose that Assumptions~\ref{asm:iva},\ref{asm:cic}, and \ref{asm:ooo} hold. 
    From Theorem 1 in \cite{rubin2007doubly} and \Cref{asm:iva}, we have, $\forall s,a \in \cbrc{0,1}$,
    \[
    \E{\psi^*_{s,a} \mid X,S=s,A=a} = \E{Y(a) \mid X,S=s},
    \]
    if $\bar{G}_{s, a} (t \lvert X)$ \eqref{eq:gbar} or $\bar{F}_{s, a} (t \lvert X)$ \eqref{eq:fbar} is correctly estimated.\end{restatable}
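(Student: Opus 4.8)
The plan is to reach the conclusion in two moves: first, invoke the doubly-robust censoring-unbiased transformation of \citet{rubin2007doubly} to identify $\E{Y \mid X, S=s, A=a}$ from $\psi^*_{s,a}$; second, pass from the observed outcome $Y$ to the potential outcome $Y(a)$ using \Cref{asm:iva}.

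For the first move, I would work inside the stratum $\cbrc{S=s,\, A=a}$, which has positive conditional probability at every $x \in {\cal X}$ by the positivity clause of \Cref{dfn:ivr}, so that every conditional expectation in \eqref{eq:def_rubin_signal} is well defined. Restricted to this stratum, \Cref{asm:cic} reads $Y \indep C \mid X$ and \Cref{asm:ooo} reads $\bar F_{s,a}(t\mid X) > 0 \implies \bar G_{s,a}(t\mid X) > 0$, which are exactly the hypotheses of Theorem~1 of \citet{rubin2007doubly}, with $\bar G_{s,a}(\cdot\mid X)$ as the conditional censoring survival function and $\bar F_{s,a}(\cdot\mid X)$ as the conditional time-to-event model. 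Moreover, under \Cref{asm:cic} the quantity $Q_{s,a}(X,c) = \E{Y \mid X, Y>c, S=s, A=a}$ is a deterministic functional of $\bar F_{s,a}(\cdot \mid X)$ (the mean of the truncated law), so it is recovered whenever $\bar F_{s,a}$ is; hence it suffices to state the double robustness in terms of the two survival functions alone. Their theorem then gives
\[
    \E{\psi^*_{s,a} \mid X, S=s, A=a} = \E{Y \mid X, S=s, A=a}
\]
whenever at least one of $\bar G_{s,a}(t\mid X)$ or $\bar F_{s,a}(t\mid X)$ is correctly specified, and I would cite this rather than reproduce the martingale/iterated-expectation argument behind it.

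For the second move, consistency gives $\E{Y \mid X, S=s, A=a} = \E{Y(a) \mid X, S=s, A=a}$, since $Y = Y(a)$ on $\cbrc{A=a}$, and no unobserved confounding, $Y(a) \indep A \mid X, S=s$, removes the conditioning event to yield $\E{Y(a) \mid X, S=s, A=a} = \E{Y(a) \mid X, S=s}$. Chaining the three equalities proves the lemma.

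There is no real obstacle beyond bookkeeping. The one point that warrants care is that all conditional expectations in \eqref{eq:def_rubin_signal} --- in particular the one defining $Q_{s,a}$ --- are taken within the $\cbrc{S=s,A=a}$ stratum, so that the independence supplied by \Cref{asm:cic} is precisely the conditional independence that \citet{rubin2007doubly} require; everything else is a routine g-computation identification.
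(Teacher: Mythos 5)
Your proposal is correct and follows essentially the same route as the paper: cite Theorem~1 of \citet{rubin2007doubly} within the $\cbrc{S=s,A=a}$ stratum (where \Cref{asm:cic} and \Cref{asm:ooo} supply its hypotheses) to identify $\E{Y \mid X,S=s,A=a}$, then apply consistency and no unobserved confounding to replace $Y$ with $Y(a)$ and drop the conditioning on $A=a$. The paper packages the second step as a small auxiliary lemma, but the argument is identical; your added remarks on positivity and on $Q_{s,a}$ being determined by $\bar F_{s,a}$ are harmless elaborations.
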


Lemma~\ref{lem:lemDR} is quite powerful as it identifies conditional average potential outcomes under right-censoring in doubly-robust way. However, $\psi^*_{s,a}$ is not readily available as an {\em instance-wise} signal to facilitate an MMR-based test, as it is only defined for $S=s$ and $A=a$. One can handle this by re-weighting with inverse selection $P(S \lvert X)$ and propensity scores $P(A \lvert X,S)$.

\begin{restatable}[]{cor}{corDR}
    \label{cor:corDR}
    Suppose that Assumptions~\ref{asm:iva},\ref{asm:cic},\ref{asm:ooo} hold and $P(S=1 \lvert X)$ is correctly estimated. Let
    \begin{align}
    \psi_{s,a}^{\tn{IPW},*} \coloneqq \frac{\ind{S=s,A=a} \psi^*_{s,a}}{P(S=s \mid X) P(A=a \mid X, S=s)}.
    \end{align}
    Then $\forall s,a \in \cbrc{0,1}$, 
    $ 
    \E{\psi_{s,a}^{\tn{IPW},*} \lvert X} = \E{Y(a) \lvert X, S=s}
    $
    if $P(A=a \lvert X,S=s)$, and either $\bar{G}_{s, a} (t \lvert X)$ \eqref{eq:gbar} or $\bar{F}_{s, a} (t \lvert X)$ \eqref{eq:fbar} are correctly estimated.
\end{restatable}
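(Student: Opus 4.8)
The plan is to obtain the identity by a standard inverse-probability-weighting argument layered on top of \Cref{lem:lemDR}, using the tower property of conditional expectation. Throughout, positivity of selection and of treatment assignment (the positivity clauses of \cref{asm:iva} and \cref{asm:evo}, together with $x \in {\cal X}$) guarantee that $P(S=s\mid X)$ and $P(A=a\mid X,S=s)$ are strictly positive, so the weight in the definition of $\psi_{s,a}^{\tn{IPW},*}$ is almost surely well defined and finite.

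First I would condition on the richer $\sigma$-field generated by $(X,S,A)$ and write $\E{\psi_{s,a}^{\tn{IPW},*}\mid X} = \E{\E{\psi_{s,a}^{\tn{IPW},*}\mid X,S,A}\mid X}$. The factor $\ind{S=s,A=a}\,/\,\big(P(S=s\mid X)\,P(A=a\mid X,S=s)\big)$ is $\sigma(X,S,A)$-measurable (the two denominators are functions of $X$ and $S$ only), so it pulls out of the inner conditional expectation, leaving $\tfrac{\ind{S=s,A=a}}{P(S=s\mid X)\,P(A=a\mid X,S=s)}\,\E{\psi^*_{s,a}\mid X,S,A}$. On the event $\{S=s,A=a\}$---the only place where the indicator is nonzero---we have $\E{\psi^*_{s,a}\mid X,S,A}=\E{\psi^*_{s,a}\mid X,S=s,A=a}$ almost surely, and \Cref{lem:lemDR} identifies this as $\E{Y(a)\mid X,S=s}$ whenever $\bar G_{s,a}(t\mid X)$ or $\bar F_{s,a}(t\mid X)$ is correctly estimated.

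Next, since $\E{Y(a)\mid X,S=s}$ is $\sigma(X)$-measurable, it factors out of the remaining outer expectation, and I am reduced to evaluating $\E{Y(a)\mid X,S=s}\cdot\E{\tfrac{\ind{S=s,A=a}}{P(S=s\mid X)\,P(A=a\mid X,S=s)}\mid X}$. The chain rule $P(S=s,A=a\mid X)=P(S=s\mid X)\,P(A=a\mid X,S=s)$ shows the second factor equals one, provided the two propensities appearing in the denominator are the true ones, i.e. $P(S=1\mid X)$ and $P(A=a\mid X,S=s)$ are correctly estimated. This gives $\E{\psi_{s,a}^{\tn{IPW},*}\mid X}=\E{Y(a)\mid X,S=s}$, and the double robustness in $(\bar G_{s,a},\bar F_{s,a})$ is inherited verbatim from \Cref{lem:lemDR}, since beyond that lemma's conclusion the argument only uses correctness of the two propensity models.

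The only point requiring care---hence the main (but mild) obstacle---is the measure-theoretic bookkeeping in the nested conditioning: verifying measurability of the weight with respect to the inner $\sigma$-field, replacing $\E{\psi^*_{s,a}\mid X,S,A}$ by its version conditioned on $\{S=s,A=a\}$ on the support of the indicator, and invoking positivity to ensure the weights are almost surely finite. The rest is the routine ``IPW has conditional mean one'' computation.
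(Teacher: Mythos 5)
Your argument is correct and is essentially the paper's own proof: both condition through $(X,S,A)$ (the paper writes this directly as the conditional expectation on the event $\{S=s,A=a\}$ times $P(S=s,A=a\mid X)$), cancel the propensity factors via the chain rule $P(S=s,A=a\mid X)=P(S=s\mid X)P(A=a\mid X,S=s)$ using correct estimation of both scores, and then invoke \Cref{lem:lemDR} to identify $\E{\psi^*_{s,a}\mid X,S=s,A=a}=\E{Y(a)\mid X,S=s}$. The only difference is cosmetic (you apply the lemma inside the tower step before evaluating the weight's conditional mean, whereas the paper cancels the weights first), so no further changes are needed.
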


Following \Cref{cor:corDR}, one can define the instance-wise signal $\psi_s^{\tn{IPW},*} \coloneqq \psi^{\tn{IPW},*}_{s,a=1} - \psi^{\tn{IPW},*}_{s,a=0}$ which is unbiased for the CATE in study $s$, that is, $\mathbb{E} [\psi_s^{\tn{IPW},*} \lvert X] = \tn{CATE}(X,s)$ (see \eqref{eq:cate}). We can then test the equivalence CATE($X,1$) $=$ CATE($X,0$) via an MMR test with the null hypothesis of $\mathbb{E} [\psi_{s=1}^{\tn{IPW},*} - \psi_{s=0}^{\tn{IPW},*} \lvert X] = 0$. 

\paragraph{Enhancing Double-robustness}
While we are now well-equipped for an MMR-based falsification test, \Cref{cor:corDR} requires the correct estimation of the propensity score of treatment $P(A=a \lvert X, S=s)$. We alleviate this requirement by building upon the doubly-robust estimation of treatment effects literature, where the correct estimation of either the propensity score or the mean outcome function is sufficient. In particular, since $\bar{F}_{s,a} (t \lvert X)$ entirely describes the time-to-event outcome distributions, $P(Y(a) \lvert X, S=s)$, estimating it correctly would suffice for estimating the CATE (see \eqref{eq:cate}), even when the propensity score estimation is incorrect. We propose the following censoring-doubly-robust (CDR) signal, which enjoys this enhanced doubly-robust property:
\begin{align} 
    \psi_{s,a}^{\tn{CDR}} &\coloneqq \frac{\ind{S=s}}{P(S=s \mid X)} \nonumber \\
    &\kern-1em \times \left( \frac{\ind{A=a} (\psi^{*}_{s, a} - \mu_{s, a}(X))}{P(A=a \mid X,S=s)} + \mu_{s, a}(X) \right), \label{eq:dr_signal}
\end{align}
where $\mu_{s, a} (X) \coloneqq \E{Y \lvert X, S=s, A=a}$ can be computed by integrating $\bar{F}_{s,a}(t \lvert X)$ over $t \in \mathbb{R}_+$.  

\begin{restatable}[]{thm}{thmDRtwo}
    \label{thm:thmDRtwo}
    Suppose that Assumptions~\ref{asm:iva}–\ref{asm:ooo} hold and $P(S=1 \lvert X)$ is correctly estimated. Then, $\forall s,a \in \cbrc{0,1}$
    \begin{equation*}
        \E{ \psi_{s,a}^{\tn{CDR}} \mid X} = \E{Y(a) \mid X, S=s},
    \end{equation*}
    if either $\bar{F}_{s,a} (t \lvert X)$ \eqref{eq:fbar}, or both $\bar{G}_{s,a} (t \lvert X)$ \eqref{eq:gbar} and $P(A=a \lvert X, S=s)$ are correctly estimated.
\end{restatable}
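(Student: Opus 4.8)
The strategy is to verify the claimed identity by conditioning on $X$ and decomposing the expectation of $\psi_{s,a}^{\mathrm{CDR}}$ into the two terms inside the parentheses of \eqref{eq:dr_signal}, handling each under a tower-of-expectations argument. I would first assume $P(S=s\mid X)$ is correct, so the outer factor $\ind{S=s}/P(S=s\mid X)$ contributes a clean reweighting: taking $\E{\,\cdot\mid X}$ and using $\E{\ind{S=s}\mid X}=P(S=s\mid X)$ reduces the problem to showing that
\[
\E{\frac{\ind{A=a}\para{\psi^*_{s,a}-\mu_{s,a}(X)}}{P(A=a\mid X,S=s)} + \mu_{s,a}(X)\,\Big|\, X, S=s} = \E{Y(a)\mid X,S=s}.
\]
This is now exactly the classical augmented-IPW / doubly-robust identity, but with $\psi^*_{s,a}$ playing the role of the "observed outcome" and $\mu_{s,a}(X)=\E{Y\mid X,S=s,A=a}$ playing the role of the outcome regression.

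The two cases then separate cleanly. \emph{Case 1: $\bar F_{s,a}(t\mid X)$ correct.} Then $\mu_{s,a}(X)$ is correctly estimated by integration of $\bar F_{s,a}$, and moreover, by Lemma~\ref{lem:lemDR} (whose hypotheses hold here since $\bar F_{s,a}$ correct suffices), $\E{\psi^*_{s,a}\mid X,S=s,A=a}=\E{Y(a)\mid X,S=s}$. The key cancellation is that $\E{\psi^*_{s,a}-\mu_{s,a}(X)\mid X,S=s,A=a} = \E{Y(a)\mid X,S=s} - \E{Y\mid X,S=s,A=a}$, and under \Cref{asm:iva} (no unobserved confounding plus consistency) $\E{Y\mid X,S=s,A=a}=\E{Y(a)\mid X,S=s,A=a}=\E{Y(a)\mid X,S=s}$, so this conditional mean is zero regardless of whether the propensity $P(A=a\mid X,S=s)$ is correct — the (possibly wrong) propensity multiplies a mean-zero quantity. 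The leftover $\mu_{s,a}(X)$ term then supplies exactly $\E{Y(a)\mid X,S=s}$. \emph{Case 2: $\bar G_{s,a}(t\mid X)$ and $P(A=a\mid X,S=s)$ correct.} Here the propensity is correct, so the IPW weighting is valid: $\E{\ind{A=a}h/P(A=a\mid X,S=s)\mid X,S=s}=\E{h\mid X,S=s,A=a}$ for any integrable $h$. Applying this with $h=\psi^*_{s,a}-\mu_{s,a}(X)$ and using that $\bar G_{s,a}$ correct makes Lemma~\ref{lem:lemDR} applicable (so $\E{\psi^*_{s,a}\mid X,S=s,A=a}=\E{Y(a)\mid X,S=s}$), the parenthesized expression becomes $\E{Y(a)\mid X,S=s} - \mu_{s,a}(X) + \mu_{s,a}(X)$, and the (possibly wrong) $\mu_{s,a}(X)$ cancels; again we land on $\E{Y(a)\mid X,S=s}$.

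The main subtlety — and the step I would be most careful with — is the bookkeeping on \emph{which} nuisance needs to be correct for \emph{which} cancellation, and in particular making sure Lemma~\ref{lem:lemDR}'s conclusion $\E{\psi^*_{s,a}\mid X,S=s,A=a}=\E{Y(a)\mid X,S=s}$ is available in both cases: it requires Assumptions~\ref{asm:iva}, \ref{asm:cic}, \ref{asm:ooo} together with \emph{one} of $\bar G_{s,a}$ or $\bar F_{s,a}$ being correct, which holds in Case 1 (via $\bar F$) and in Case 2 (via $\bar G$). A secondary point is justifying that $\mu_{s,a}(X)=\E{Y\mid X,S=s,A=a}$ is recovered by integrating $\bar F_{s,a}(t\mid X)$ over $\mathbb{R}_+$, i.e. $\E{Y\mid X,S=s,A=a}=\int_0^\infty \bar F_{s,a}(t\mid X)\,dt$, which is the standard tail-integral formula for a nonnegative random variable and is exactly why "correct $\bar F$" implies "correct $\mu$." Finally I would note that \Cref{asm:ooo} is what guarantees all the inverse-weights appearing in $\psi^*_{s,a}$ are well-defined on the relevant support, so none of the conditional expectations above are vacuous or infinite; with that in hand, assembling the two cases and re-inserting the outer $S$-reweighting completes the proof.
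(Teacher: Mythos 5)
Your proposal is correct and mirrors the paper's own proof: the same case split (correct $\bar F_{s,a}$ versus correct $\bar G_{s,a}$ and propensity), the same use of Lemma~\ref{lem:lemDR} together with the identification $\E{Y\mid X,S=s,A=a}=\E{Y(a)\mid X,S=s}$ to make the augmentation term mean-zero when $\bar F$ is correct, and the same AIPW cancellation of the (possibly wrong) $\mu_{s,a}(X)$ when the propensity is correct. The only difference is cosmetic bookkeeping in Case~2, where the paper groups the $\mu$ terms and uses $\E{(\ind{A=a}-P(A=a\mid X,S=s))/P(A=a\mid X,S=s)\mid X,S=s}=0$ while you cancel them directly.
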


Although correctly estimating $\bar{F}$ under censoring remains challenging, the doubly-robust property is still desirable. For instance, consider a scenario where the censoring time tends to be high, and most of the observations are non-censored. In that case, due to the small number of censored observations, our estimates for $\bar{G}_{s,a} (t|X)$ may suffer from high variance, whereas the estimates for $\bar{F}_{s,a}(t|X)$ may be more reliable. It is also sufficient to correctly estimate $\bar{G}_{s,a} (t|X)$ and $P(A=a \lvert X, S=s)$. This is advantageous in scenarios where further assumptions on censoring simplify the estimation of $\bar{G}$ ({\em e.g.}, under type-1 censoring discussed at the end of this section). We investigate the doubly-robustness of our signal empirically in \Cref{supssec:drtest}.

Starting from \eqref{eq:dr_signal}, we define the censoring-doubly-robust instance-wise signals for CATE in study $s$ and \enquote{CATE difference} across studies as follows.
\begin{equation}\label{eq:cdrSig}
    \begin{split}
            \psi_{s}^{\tn{CDR}} &\coloneqq \psi_{s,a=1}^{\tn{CDR}} - \psi_{s,a=0}^{\tn{CDR}}.  \\
    \psi^{\tn{CDR}} &\coloneqq \psi_{s=1}^{\tn{CDR}} - \psi_{s=0}^{\tn{CDR}}.
    \end{split}
\end{equation}

\begin{restatable}{thm}{prethmMMR}
    \label{thm:prethmMMR}
    Suppose that Assumptions~\ref{asm:iva}–\ref{asm:ooo} hold and $P(S=1 \lvert X)$ is correctly estimated. Then
    \[
        \E{\psi_{s=1}^{\tn{CDR}} \mid X}  = \E{\psi_{s=0}^{\tn{CDR}} \mid X} = \tn{CATE} (X,0),
    \]
    where \tn{CATE($X,s$)} is defined in \eqref{eq:cate}, and therefore
    \begin{equation} \label{eq:cdrSig_int}
         \mathbb{E} [\psi^{\tn{CDR}} \mid X] = 0,
    \end{equation}
     if $~\forall s,a \in \cbrc{0,1}$, either $\bar{F}_{s,a} (t \lvert X)$ \eqref{eq:fbar}, or both $\bar{G}_{s,a} (t \lvert X)$ \eqref{eq:gbar} and $P(A=a \lvert X, S=s)$ are correctly estimated.
\end{restatable}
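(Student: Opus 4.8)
The plan is to obtain \eqref{eq:cdrSig_int} by chaining \Cref{thm:thmDRtwo}, linearity of conditional expectation, and the external-validity assumption; almost all of the work is already carried by \Cref{thm:thmDRtwo}, so the remaining argument is short.

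First I would invoke \Cref{thm:thmDRtwo}: under Assumptions~\ref{asm:iva}--\ref{asm:ooo}, correct estimation of $P(S=1\mid X)$, and the stated nuisance condition (for each pair $(s,a)$, either $\bar F_{s,a}(t\mid X)$ or both $\bar G_{s,a}(t\mid X)$ and $P(A=a\mid X,S=s)$ are correctly estimated), it holds that $\E{\psi_{s,a}^{\tn{CDR}}\mid X}=\E{Y(a)\mid X,S=s}$ for all $s,a\in\cbrc{0,1}$. Here one should note that positivity of selection (\Cref{asm:evo}) and positivity of treatment assignment (\Cref{asm:iva}), together with \Cref{asm:ooo}, guarantee that the inverse weights $1/P(S=s\mid X)$, $1/P(A=a\mid X,S=s)$ and $1/\bar G_{s,a}$ appearing in \eqref{eq:dr_signal} and \eqref{eq:def_rubin_signal} are almost surely finite, so each $\psi_{s,a}^{\tn{CDR}}$ is a.s.\ well defined and integrable on ${\cal X}$, which licenses the manipulations that follow.

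Next, using the definition $\psi_s^{\tn{CDR}}=\psi_{s,a=1}^{\tn{CDR}}-\psi_{s,a=0}^{\tn{CDR}}$ from \eqref{eq:cdrSig} and linearity of conditional expectation,
\[
    \E{\psi_s^{\tn{CDR}}\mid X}=\E{Y(1)\mid X,S=s}-\E{Y(0)\mid X,S=s}=\tn{CATE}(X,s),
\]
by the definition \eqref{eq:cate}. Then I apply external validity (\Cref{asm:evo}): ignorability of selection, $Y(a)\indep S\mid X$, gives $\E{Y(a)\mid X,S=s}=\E{Y(a)\mid X}$, which does not depend on $s$; hence $\tn{CATE}(X,1)=\tn{CATE}(X,0)$ almost surely in $P_X$, and in particular $\E{\psi_{s=1}^{\tn{CDR}}\mid X}=\E{\psi_{s=0}^{\tn{CDR}}\mid X}=\tn{CATE}(X,0)$.

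Finally, from $\psi^{\tn{CDR}}=\psi_{s=1}^{\tn{CDR}}-\psi_{s=0}^{\tn{CDR}}$ in \eqref{eq:cdrSig} and one more application of linearity, $\E{\psi^{\tn{CDR}}\mid X}=\tn{CATE}(X,1)-\tn{CATE}(X,0)=0$ $P_X$-almost surely, which is \eqref{eq:cdrSig_int}. I do not anticipate a genuine obstacle: the only points needing care are verifying that the reweighting terms are a.s.\ well defined via the two positivity conditions and that integrability (inherited from \Cref{lem:lemDR} and \Cref{thm:thmDRtwo}) justifies each use of linearity of conditional expectation; everything else is bookkeeping with the definitions in \eqref{eq:cdrSig}.
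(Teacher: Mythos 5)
Your proposal is correct and follows essentially the same route as the paper's proof: decompose $\psi_{s}^{\tn{CDR}}$ via \eqref{eq:cdrSig}, apply \Cref{thm:thmDRtwo} to identify $\E{\psi_{s,a}^{\tn{CDR}}\mid X}=\E{Y(a)\mid X,S=s}$, and then use ignorability of selection (\Cref{asm:evo}) to equate the two CATEs, yielding \eqref{eq:cdrSig_int}. The added remarks on positivity and integrability are fine but not part of the paper's (shorter) argument.
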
  

If \eqref{eq:cdrSig_int} fails to hold, it means that a subset of the  Assumptions~\ref{asm:iva}–\ref{asm:ooo} is violated. It remains to convert this condition into a hypothesis that can be tested using the maximum moment restriction via the machinery of reproducing kernel Hilbert spaces (RKHS) \citep{muandet2020kernel}. Before writing the full characterization of the test, we note that weaker assumptions on external validity also suffice to construct the hypothesis in \eqref{eq:cdrSig_int}.

\begin{restatable}{prop}{propMMR}
    \label{prop:propMMR}
Consider the same setup in \Cref{thm:prethmMMR}, with the only difference being that we assume that
\[
\tn{CATE} (X,0) = \tn{CATE} (X,1),
\]
instead of the stronger \enquote{ignorability of selection} in \Cref{asm:evo}, where \tn{CATE($X,s$)} is defined in \eqref{eq:cate}. Then, results in \Cref{thm:prethmMMR} continue to hold.
\end{restatable}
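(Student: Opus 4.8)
The plan is to observe that in the proof of \Cref{thm:prethmMMR} the ignorability-of-selection part of \Cref{asm:evo} is used for exactly one purpose: to conclude that the two study-specific CATEs coincide, $\tn{CATE}(X,1) = \tn{CATE}(X,0)$. If we instead take this equality as a hypothesis, then everything before it and everything after it in that argument goes through verbatim. Note that the positivity-of-selection part of \Cref{asm:evo} is \emph{retained} in the setup of \Cref{prop:propMMR}; together with trial-eligibility ($x\in{\cal X}$, i.e. $\prb{S=0\mid X=x}>0$) it guarantees $\prb{S=s\mid X}>0$ for $s\in\cbrc{0,1}$, which is what makes the reweighting by $1/\prb{S=s\mid X}$ in \eqref{eq:dr_signal} well defined.

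Concretely, I would proceed in four short steps. First, apply \Cref{thm:thmDRtwo}: under \Cref{asm:iva}, \Cref{asm:cic}, \Cref{asm:ooo}, correct estimation of $\prb{S=1\mid X}$, and the nuisance condition (either $\bar F_{s,a}(t\mid X)$, or both $\bar G_{s,a}(t\mid X)$ and $\prb{A=a\mid X,S=s}$, correctly estimated), we have $\E{\psi_{s,a}^{\tn{CDR}}\mid X} = \E{Y(a)\mid X,S=s}$ for all $s,a\in\cbrc{0,1}$; this is a purely within-study identity whose proof (via \Cref{lem:lemDR} and \Cref{cor:corDR}, each of which invokes only \Cref{asm:iva}, \Cref{asm:cic}, \Cref{asm:ooo}) never uses ignorability of selection. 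Second, subtract over $a$ and use linearity of conditional expectation together with the definition of $\psi_s^{\tn{CDR}}$ in \eqref{eq:cdrSig}: for each $s$, $\E{\psi_s^{\tn{CDR}}\mid X} = \E{Y(1)\mid X,S=s} - \E{Y(0)\mid X,S=s} = \tn{CATE}(X,s)$. Third, invoke the hypothesis $\tn{CATE}(X,0)=\tn{CATE}(X,1)$ directly, which yields $\E{\psi_{s=1}^{\tn{CDR}}\mid X} = \tn{CATE}(X,1) = \tn{CATE}(X,0) = \E{\psi_{s=0}^{\tn{CDR}}\mid X}$, establishing the first display of \Cref{thm:prethmMMR}. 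Fourth, subtract the two and recall $\psi^{\tn{CDR}} = \psi_{s=1}^{\tn{CDR}} - \psi_{s=0}^{\tn{CDR}}$ to obtain $\E{\psi^{\tn{CDR}}\mid X}=0$, i.e.\ \eqref{eq:cdrSig_int}.

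I do not expect a genuine obstacle here; the only point requiring care is the claim in the first step that \Cref{thm:thmDRtwo} does not secretly rely on ignorability of selection. This is verified by tracing its dependency chain: the censoring-unbiased identity of \citet{rubin2007doubly} combined with \Cref{asm:iva} (\Cref{lem:lemDR}) is established entirely conditionally on $S=s$; the inverse-propensity reweighting of \Cref{cor:corDR} and the augmentation in \eqref{eq:dr_signal} likewise manipulate only quantities conditioned on $S=s$; and $\prb{S=1\mid X}$ enters merely as a known reweighting factor. Hence no cross-study transport assumption is needed to get $\E{\psi_{s,a}^{\tn{CDR}}\mid X}=\E{Y(a)\mid X,S=s}$, and replacing ignorability of selection by the weaker CATE-equality hypothesis is exactly enough to recover \eqref{eq:cdrSig_int}. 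The remainder is a word-for-word repetition of the argument for \Cref{thm:prethmMMR}.
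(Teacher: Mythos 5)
Your proposal is correct and follows essentially the same route as the paper: the paper's proof of \Cref{prop:propMMR} simply reruns the proof of \Cref{thm:prethmMMR} (which applies \Cref{thm:thmDRtwo} within each study and therefore never needs ignorability of selection) and substitutes the hypothesis $\tn{CATE}(X,0)=\tn{CATE}(X,1)$ at the single step where cross-study transport was invoked. Your extra care in checking that \Cref{thm:thmDRtwo} and the retained positivity of selection suffice is exactly the (implicit) content of the paper's one-line argument.
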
  

\Cref{prop:propMMR} highlights that we are effectively testing a weaker assumption. This is desirable when one is not necessarily interested in whether the conditional potential outcomes have identical distributions in both studies, but only if the CATE functions are the same. The SPRINT trial~\citep{sprint2015randomized} is an example where only the latter holds.

\begin{restatable}[MMR-based test for validity assumptions with conditionally independent censoring]{thm}{thmMMR}
    \label{thm:thmMMR}
    Let $\psi = \psi^{\tn{CDR}}$ and suppose that $P(S=1 \lvert X)$ is correctly estimated. Suppose that either $\bar{F}_{s,a} (t \lvert X)$ \eqref{eq:fbar}, or {\em both} $\bar{G}_{s,a} (t \lvert X)$ \eqref{eq:gbar} and $P(A=a \lvert X, S=s)$ are correctly estimated $\forall s,a \in \cbrc{0,1}$. Let $k(\cdot,\cdot)$ be an ISPD\footnote{$k(\cdot,\cdot): {\cal X} \times {\cal X} \rightarrow \mathbb{R}$ is said to be integrally strictly positive definite (ISPD) if for all $f: {\cal X} \rightarrow \mathbb{R}$ satisfying $0< \|f\|_2^2 < \infty$, we have $\int_{{\cal X} \times {\cal X}} f(x)k(x, x')f(x') \dif x \dif x' > 0$.}, continuous, and bounded kernel, and ${\cal F}$ be the RKHS endowed with $k(\cdot,\cdot)$. Suppose that $\lvert \E{\psi\lvert X} \rvert  < \infty$ and $\mathbb{E} \big[[\psi k(X, X') \psi']^2\big] < \infty$ a.s. in $P_X$, where $(\psi, X)$ and $(\psi', X')$ are i.i.d. Let $\mathbb{M} = \sup_{f \in {\cal F}, ||f|| \le 1}\left(\E{\psi f(X)}\right)^2$ be the maximum moment restriction (MMR). Then, under Assumptions~\ref{asm:iva}–\ref{asm:ooo}, the conditional moment restriction $\E{\psi \lvert X} = 0$ holds $P_X$ a.s., which implies that the following null hypothesis $H_0$ holds.
    \[
        H_0:~\mathbb{M}^2 = 0, \qquad \qquad H_1:~\mathbb{M}^2 \neq 0.
    \]
    We can then use the following empirical estimate of $\mathbb{M}^2$ as the test statistic,
    \begin{equation}
    \label{eq:test_statistic}
        \hat{\mathbb{M}}_n^2 = \frac{1}{n(n-1)}\sum_{i,j \in {\cal I}, i \ne j} \psi_i k(x_i, x_j)\psi_j.
    \end{equation}
    which has the following asymptotic distributions under the null $H_0$ and the alternative $H_1$ hypotheses.
    \begin{align*}
        \tn{Under}~H_0&:~\hat{\mathbb{M}}_n^2 \xrightarrow[]{d} \sum_{j=1}^\infty \lambda_j(Z_j^2 - 1). \\ \tn{Under}~H_1&:~\sqrt{n}(\hat{\mathbb{M}}_n^2 - \mathbb{M}^2) \xrightarrow[]{d} {\cal N}(0, 4\sigma^2).
    \end{align*}
     where $Z_j$ are i.i.d. standard normal variables and $\lambda_j$ are the eigenvalues of $\psi k(x,x') \psi'$, and $\sigma^2= \tn{Var}_{(\psi, X)} \left(\mathbb{E}_{(\psi', X')}[\psi k(X,X') \psi']\right)$.
\end{restatable}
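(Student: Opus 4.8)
The plan is to chain together the pieces already established and then invoke the known MMR testing machinery of \citet{muandet2020kernel}. The first step is purely a restatement: under Assumptions~\ref{asm:iva}--\ref{asm:ooo}, together with the assumption that $P(S=1 \mid X)$ and (for each $s,a$) either $\bar F_{s,a}(t \mid X)$ or both $\bar G_{s,a}(t \mid X)$ and $P(A=a \mid X, S=s)$ are correctly estimated, \Cref{thm:prethmMMR} gives $\E{\psi^{\tn{CDR}} \mid X} = 0$ $P_X$-a.s. So with $\psi = \psi^{\tn{CDR}}$ we have the conditional moment restriction $\E{\psi \mid X} = 0$ a.s. for free.

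Next I would show this CMR is equivalent to $\mathbb{M}^2 = 0$. This is where the ISPD, continuity, boundedness of $k$, and the integrability hypotheses ($\lvert \E{\psi \mid X}\rvert < \infty$ and $\E{[\psi k(X,X')\psi']^2} < \infty$) are used. The key identity is the reproducing-kernel rewriting
\[
  \mathbb{M}^2 \;=\; \sup_{\|f\|_{\cal F} \le 1}\big(\E{\psi f(X)}\big)^2 \;=\; \big\| \E{\psi\, k(X,\cdot)} \big\|_{\cal F}^2 \;=\; \E{\psi\, k(X,X')\,\psi'},
\]
where $(\psi,X)$ and $(\psi',X')$ are i.i.d.; the first equality is Cauchy--Schwarz in ${\cal F}$ attained at $f \propto \E{\psi k(X,\cdot)}$, the second is Fubini (justified by the moment conditions) and the reproducing property. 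Writing $g(x) \coloneqq \E{\psi \mid X = x}$ and using the tower rule, $\mathbb{M}^2 = \iint g(x)k(x,x')g(x')\,dP_X(x)\,dP_X(x')$, which by the ISPD property of $k$ vanishes if and only if $g \equiv 0$ $P_X$-a.s., i.e. if and only if $\E{\psi \mid X} = 0$ a.s. Hence $H_0$ holds under the stated assumptions. The converse direction (that a violation of the assumptions is detectable) is exactly the contrapositive and needs no extra work for the theorem as stated, though it is worth a sentence.

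Finally I would establish the asymptotic null and alternative distributions of the $V$-/$U$-statistic $\hat{\mathbb{M}}_n^2$ in \eqref{eq:test_statistic}. Here the cleanest route is to cite \citet{muandet2020kernel}: $\hat{\mathbb{M}}_n^2$ is a degenerate $U$-statistic of order two with symmetric kernel $h((\psi,x),(\psi',x')) = \psi k(x,x')\psi'$. Under $H_0$ the kernel is degenerate (its conditional expectation $\E{h \mid (\psi,x)} = \psi\,\E{\psi' k(x,X')} = 0$ a.s. because $\E{\psi \mid X}=0$), so standard degenerate $U$-statistic theory gives the weighted-chi-square limit $\sum_j \lambda_j(Z_j^2-1)$ with $\{\lambda_j\}$ the eigenvalues of the integral operator associated with $h$ under $P$. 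Under $H_1$ the kernel is non-degenerate, $\E{h\mid(\psi,x)}$ is nonconstant, and the Hoeffding decomposition plus the classical CLT for non-degenerate $U$-statistics yields $\sqrt{n}(\hat{\mathbb{M}}_n^2 - \mathbb{M}^2) \xrightarrow{d} {\cal N}(0, 4\sigma^2)$ with $\sigma^2 = \mathrm{Var}(\E{h\mid(\psi,X)})$; the finite-variance hypothesis $\E{[\psi k(X,X')\psi']^2}<\infty$ is exactly what makes the $U$-statistic CLT applicable. I do not expect a genuine obstacle here—the result is essentially an application of \citet{muandet2020kernel} with $\psi = \psi^{\tn{CDR}}$—so the only care needed is bookkeeping: verifying that the moment/boundedness assumptions in the theorem statement are precisely those required by the cited $U$-statistic limit theorems, and that the pooling of RCT and OS samples into a single index set ${\cal I}$ does not disturb the i.i.d.\ structure used in those theorems (it does not, since $(\psi_i, x_i)$ are i.i.d.\ draws from the mixture over $S$).
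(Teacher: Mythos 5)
Your proposal is correct and follows essentially the same route as the paper: the paper's proof simply invokes \Cref{thm:prethmMMR} to obtain $\E{\psi^{\tn{CDR}} \mid X} = 0$ $P_X$-a.s.\ and then cites Theorem~3.1 of \citet{hussain2023falsification} (itself built on \citet{muandet2020kernel}) for the equivalence with $\mathbb{M}^2 = 0$ and the $U$-statistic asymptotics. Your additional detail on the RKHS rewriting, the ISPD argument, and the degenerate versus non-degenerate $U$-statistic limits is an accurate unpacking of what those cited results contain, not a different argument.
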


\Cref{thm:thmMMR} formalizes the implications of Assumptions \ref{asm:iva}-\ref{asm:ooo} as a null hypothesis we can test by calculating a statistic (see \eqref{eq:test_statistic}) from the data and compare against a threshold $t_{\alpha}$ where $\alpha \in (0,1)$ controls the acceptable \enquote{risk} of falsifying the assumptions when they are indeed true. We provide the explicit steps in \Cref{alg:high_level}.
\begin{algorithm}[t]
   \caption{Testing for internal and external validity under conditionally independent right-censoring}\label{alg:high_level}
\begin{algorithmic}
   \STATE {\bfseries Input:} Combined sample from RCT $S=0$ and OS $S=1$: $\{X_i, A_i, S_i, \tilde{Y}_i, \Delta_i \}_{i=1}^{n}$, desired test level $\alpha$
   \STATE {\textbf{1.} Estimate $P(S|X),~P(A|X,S),~\bar{G},~\bar{F}$}
   \STATE {\textbf{2.} Compute $\psi^{\tn{CDR}}_i$ in \eqref{eq:cdrSig} for $i \in \cbrc{1,\ldots,n}$}
   \STATE {\textbf{3.} Compute test statistic $\mathbb{M}^2$ in \eqref{eq:test_statistic}} for $\psi_i\!=\psi^{\tn{CDR}}_i$
   \STATE {\textbf{4.} Compute test threshold $t_{\alpha}$} (see~\Cref{app:test_details})
   \STATE {\textbf{5.} \textbf{if} $t_{\alpha} < \alpha$ \textbf{then} reject $H_0$ \textbf{else} accept $H_0$}
\end{algorithmic}
\end{algorithm}

Note that the censoring assumptions cannot be verified separately. As such, linking the rejection of the test to the violation of the validity assumptions is difficult. Nevertheless, there are cases where the censoring assumptions are true by design. Consider a study where patients are recruited at different times and followed until a fixed endpoint. In that case, the censoring time is known as soon as a patient enters the study (type-1 censoring \citep{leung1997censoring}), and is independent from their time-to-event outcome.
\subsection{Falsification with Global Censoring} \label{ssec:fwgc}
While conditionally independent censoring may be plausible, it is easy to imagine settings where it is not. This makes it challenging to attribute the rejection of the test the violation of the validity assumptions rather than censoring assumptions. For instance, consider a study where patients are more likely to drop out after experiencing adverse side effects, and a short censoring time is associated with a short survival time. This induces {\em dependent} censoring and renders \Cref{asm:cic} implausible. Therefore, it is critical to understand how a benchmarking procedure fares under dependent censoring models \citep{gharari2023copula}. 

In this section, we introduce an alternative (and perhaps more plausible) censoring mechanism, which we refer to as {\em global censoring}, and show that the $\psi^{\tn{CDR}}$ signal in \eqref{eq:cdrSig} can still be used to test the validity assumptions. Global censoring allows for dependent censoring, contingent on the conditional distribution of the censoring time $C$ being identical in the RCT and the OS, reflecting the intuition that the censoring mechanism is the same in RCT and OS populations. 

\begin{restatable}[\textit{Global censoring}]{asm}{cic}
    \label{asm:gc}
    \[
        C \indep S \mid Y, X, A.
    \]
\end{restatable}

\paragraph{CDR Signal with Global Censoring} 
Global censoring does not entail conditionally independent censoring. Therefore, the CATE is not necessarily identifiable in the RCT or the OS. This challenges the core idea in \Cref{ssec:fwic} where internal validity and conditionally independent censoring imply that the instance-wise signals $\psi_{s=0}^{\tn{CDR}}$ and $\psi_{s=1}^{\tn{CDR}}$ in \eqref{eq:cdrSig} are unbiased for the CATE$(X,0)$ and CATE$(X,1)$ in \eqref{eq:cate}. Since CATE$(X,0)$= CATE$(X,1)$ by external validity, we proposed testing the equivalence of two signals as a proxy for testing the validity assumptions (see \Cref{thm:thmMMR}).

Nevertheless, we can show that the global censoring assumptions also imply the equivalence of $\psi_{s=0}^{\tn{CDR}}$ and $\psi_{s=1}^{\tn{CDR}}$, even if these signals are no longer unbiased for the CATE anymore. Crucially, this means that the same falsification test in \Cref{thm:thmMMR} can also be used under global censoring, as we show next.

\begin{restatable}[]{thm}{gcThm}
    \label{thm:gcThm}
    If Assumptions \ref{asm:iva},\ref{asm:evo},\ref{asm:gc} hold, we have
    \begin{equation}
        \mathbb{E} [\psi^{\tn{CDR}} \mid X] = \mathbb{E} [\psi^{\tn{CDR}}_{s=1} - \psi^{\tn{CDR}}_{s=0} \mid X]= 0, \label{eq:gccdrthm}
    \end{equation} 
    where $\psi^{\tn{CDR}}_{s}$ is defined in (\ref{eq:dr_signal},\ref{eq:cdrSig}).
\end{restatable}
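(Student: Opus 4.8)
\emph{Proof plan.} The plan is to establish the stronger per-treatment statement $\E{\psi^{\tn{CDR}}_{s=1,a}\mid X}=\E{\psi^{\tn{CDR}}_{s=0,a}\mid X}$ for each $a\in\cbrc{0,1}$ and then subtract over $a$. First I would strip the inverse weights out of \eqref{eq:dr_signal}: taking $\E{\cdot\mid X}$, the factor $\ind{S=s}/P(S=s\mid X)$ converts the expectation into one conditional on $\{S=s\}$ — legitimate because $P(S=0\mid X)>0$ on ${\cal X}$ and $P(S=1\mid X)>0$ by \Cref{asm:evo} — and then $\ind{A=a}/P(A=a\mid X,S=s)$ converts it into one conditional on $\{S=s,A=a\}$ by positivity of treatment in \Cref{asm:iva}; the augmentation term $\mu_{s,a}(X)$ enters once with weight $+1$ and once with weight $-1$ and, being $X$-measurable, cancels. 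This leaves the clean reduction
\[
\E{\psi^{\tn{CDR}}_{s,a}\mid X}=\E{\psi^{*}_{s,a}\mid X,S=s,A=a},
\]
with $\psi^{*}_{s,a}$ the doubly-robust transformation in \eqref{eq:def_rubin_signal}; in particular the treatment propensities have disappeared, so they need not agree across studies.

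The heart of the argument is then to show the right-hand side does not depend on $s$. I would first argue that the conditional law of the \emph{observed} data is common to the two studies: by consistency and no unobserved confounding (\Cref{asm:iva}), the conditional law of $Y$ given $(X,A=a,S=s)$ equals that of $Y(a)$ given $(X,S=s)$, which by ignorability of selection (\Cref{asm:evo}) equals that of $Y(a)$ given $X$ and is therefore $s$-free; by global censoring (\Cref{asm:gc}), the conditional law of $C$ given $(Y,X,A=a,S=s)$ is likewise $s$-free; and since the joint law of $(Y,C)$ given $(X,A=a,S=s)$ factors through these two pieces, it is $s$-free, hence so is the conditional law of $(\tilde Y,\Delta)=(\min(Y,C),\ind{Y\le C})$ given $(X,A=a,S=s)$. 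Consequently every nuisance entering \eqref{eq:def_rubin_signal} — the censoring survival $\bar G_{s,a}$ in \eqref{eq:gbar} and its CDF $G_{s,a}$, the regression $Q_{s,a}(X,c)=\E{Y\mid X,Y>c,S=s,A=a}$, as well as $\bar F_{s,a}$ in \eqref{eq:fbar} and $\mu_{s,a}$ — is literally the \emph{same function of its arguments} for $s=0$ and $s=1$.

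Finally, I would close the loop: $\psi^{*}_{s,a}$ is thus one and the same measurable map $\varphi_a(\tilde Y,\Delta,X)$ in both studies, and $\E{\varphi_a(\tilde Y,\Delta,X)\mid X,A=a,S=s}$ is the integral of $\varphi_a$ against the $s$-free conditional law established above, hence independent of $s$. Combined with the reduction from the first paragraph this gives $\E{\psi^{\tn{CDR}}_{s=1,a}\mid X}=\E{\psi^{\tn{CDR}}_{s=0,a}\mid X}$ for $a=0,1$, and subtracting yields $\E{\psi^{\tn{CDR}}_{s=1}\mid X}=\E{\psi^{\tn{CDR}}_{s=0}\mid X}$, i.e.\ \eqref{eq:gccdrthm}.

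I expect the delicate point to be the bookkeeping in the middle step: one must check carefully that $\psi^{*}_{0,a}$ and $\psi^{*}_{1,a}$ — built from samples with genuinely different marginals of $(X,A)$ and different treatment-assignment mechanisms — nonetheless agree as functionals once restricted to the $\{S=s,A=a\}$ stratum, and that this agreement truly requires \emph{both} ingredients: the potential-outcome-level cross-study equality from internal and external validity (to match $\bar F_{s,a}$, equivalently $Q_{s,a}$ and $\mu_{s,a}$) \emph{and} the cross-study censoring equality of \Cref{asm:gc} (to match $\bar G_{s,a}$, $G_{s,a}$ and the law of $(\tilde Y,\Delta)$). Note that, unlike under \Cref{asm:cic}, I would \emph{not} attempt to identify the common value as $\tn{CATE}(X,0)$: in general $\E{\psi^{*}_{s,a}\mid X,S=s,A=a}$ is a biased transformation of $\E{Y(a)\mid X}$, but the bias is identical in the two studies and cancels in the difference $\psi^{\tn{CDR}}$.
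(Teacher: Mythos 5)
Your proposal is correct, and its first step is exactly the paper's: stripping the weights to get $\E{\psi^{\tn{CDR}}_{s,a}\mid X}=\E{\psi^{*}_{s,a}\mid X,S=s,A=a}$ with the $\mu_{s,a}(X)$ augmentation terms cancelling (this is \eqref{eq:oct3-1} in the paper's proof). Where you genuinely differ is in how the study-invariance of that stratum-conditional expectation is established. The paper works term by term through the three pieces of $\psi^{*}_{s,a}$ in \eqref{eq:def_rubin_signal}, expanding each conditional expectation over the values of $Y$ (and $C$) and swapping $s=0$ for $s=1$ with the help of auxiliary lemmas showing $\bar{G}_{s,a}$ and $Q_{s,a}$ agree across studies (Lemmas~\ref{suplem:CateId} and \ref{suplem:gclem0}). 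You instead prove a single stronger fact: under Assumptions~\ref{asm:iva}, \ref{asm:evo}, \ref{asm:gc}, the conditional law of $(Y,C)$ — hence of $(\tilde Y,\Delta)$ — given $(X,A=a,S=s)$ does not depend on $s$, and all nuisances entering $\psi^{*}_{s,a}$ are therefore the same functions in both studies, so $\psi^{*}_{s,a}$ is one fixed measurable map whose conditional mean against an $s$-free law is automatically $s$-free. This buys a shorter, more modular argument that avoids the three-term bookkeeping, is cleanly measure-theoretic rather than relying on discrete sums over $y$, and immediately yields \Cref{thm:gcThm2} (indeed, the analogous conclusion for \emph{any} signal that is a fixed functional of $(\tilde Y,\Delta,X)$ built from study-invariant nuisances) as a corollary; the paper's explicit expansion, by contrast, makes visible exactly where each assumption enters each term. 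Your closing remark — that no identification of the common value as $\tn{CATE}(X,0)$ is claimed, the bias being identical across studies and cancelling — matches the paper's proof sketch.
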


\begin{proof} [Proof Sketch]
Even though $\psi^{\tn{CDR}}_{s}$ are biased for the CATE$(X,s)$ in general (as opposed to \Cref{thm:prethmMMR}), the bias in the RCT $S=0$ and the OS $S=1$ will be the same due to \Cref{asm:gc}; therefore the conditional moment restrictions in \eqref{eq:gccdrthm} will still hold.
\end{proof}
\Cref{thm:gcThm} allows to use $\psi^{\tn{CDR}}$ to under global censoring through the same machinery in \Cref{thm:thmMMR}. This property is very desirable as it makes testing the validity assumptions, which is our original motivation, more tangible by providing a falsification test that works under two different censoring mechanisms, significantly increasing the generality of the procedure.

\paragraph{Alternative Signals for Global Censoring}
The global censoring assumption also allows the construction of more straightforward signals to test the validity assumptions. For instance, we propose the following IPW signals that use $\tilde{Y}$ in \eqref{eq:ytilde_def}:
\begin{align}
    \psi^{\tn{IPW},\tilde{Y}}_{s} &\coloneqq \frac{\ind{S=s,A=1} \tilde{Y}}{P(S=s,A=1 \lvert X)} - \frac{\ind{S=s,A=0} \tilde{Y}}{P(S=s,A=0 \lvert X)}. \nonumber \\
    \psi^{\tn{IPW},\tilde{Y}} &\coloneqq \psi^{\tn{IPW},\tilde{Y}}_{s=1} - \psi^{\tn{IPW},\tilde{Y}}_{s=0}. \label{eq:gcipw}
\end{align}

\begin{restatable}[]{thm}{gcThmtwo}
    \label{thm:gcThm2}
    If Assumptions \ref{asm:iva},\ref{asm:evo},\ref{asm:gc} hold, we have
    \begin{equation}
        \mathbb{E} [\psi^{\tn{IPW},\tilde{Y}} \mid X] = \mathbb{E} [\psi^{\tn{IPW},\tilde{Y}}_{s=1} - \psi^{\tn{IPW},\tilde{Y}}_{s=0} \mid X] = 0. \label{eq:gcipwthm}
    \end{equation} 
\end{restatable}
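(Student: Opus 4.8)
The plan is to reduce \eqref{eq:gcipwthm} to the single structural fact that the conditional law of $\tilde Y=\min(Y,C)$ given $(X,S,A)$ does not depend on the study indicator $S$, and then invoke a routine inverse-probability-weighting (IPW) identity. First I would fix $s,a\in\cbrc{0,1}$ and note that positivity of treatment assignment (\Cref{dfn:ivr}), positivity of selection (\Cref{asm:evo}), and the definition of ${\cal X}$ together give $\prb{S=s,A=a\mid X}>0$ a.s.\ in $P_X$, so every denominator in \eqref{eq:gcipw} is well defined. Assuming the mild regularity condition $\mathbb{E}[\tilde Y\mid X,S=s,A=a]<\infty$, the tower property gives
\[
    \mathbb{E}\!\left[\frac{\ind{S=s,A=a}\,\tilde Y}{\prb{S=s,A=a\mid X}}\;\middle|\;X\right]
    = \frac{\mathbb{E}\!\left[\ind{S=s,A=a}\,\tilde Y\mid X\right]}{\prb{S=s,A=a\mid X}}
    = \mathbb{E}[\tilde Y\mid X,S=s,A=a].
\]
Writing $m_s^a(X)\coloneqq\mathbb{E}[\tilde Y\mid X,S=s,A=a]$ and combining the two terms that define $\psi^{\tn{IPW},\tilde{Y}}_{s}$ in \eqref{eq:gcipw}, we obtain $\mathbb{E}[\psi^{\tn{IPW},\tilde{Y}}_{s}\mid X]=m_s^1(X)-m_s^0(X)$.

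The heart of the argument is then to show $m_0^a(X)=m_1^a(X)$ for $a\in\cbrc{0,1}$, i.e.\ to transport the \emph{joint} conditional law of $(Y,C)$ across the two studies. I would do this in two moves. For the $Y$-marginal: by consistency, on $\cbrc{A=a}$ we have $Y=Y(a)$, so the law of $Y$ given $(X,S=s,A=a)$ is the law of $Y(a)$ given $(X,S=s,A=a)$; no unobserved confounding ($Y(a)\indep A\mid X,S=s$) removes the conditioning on $A$, and ignorability of selection ($Y(a)\indep S\mid X$) then removes the conditioning on $S$, leaving the law of $Y(a)$ given $X$, which is $s$-free. For the conditional law of $C$ given $Y$: global censoring, $C\indep S\mid Y,X,A$, says precisely that $\mathbb{E}[g(Y,C)\mid Y,X,A=a,S=s]$ does not depend on $s$ for bounded measurable $g$. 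Chaining the two via
\[
    \mathbb{E}[g(Y,C)\mid X,S=s,A=a]
    = \mathbb{E}\!\left[\,\mathbb{E}[g(Y,C)\mid Y,X,A=a,S=s]\;\middle|\;X,S=s,A=a\,\right]
\]
shows the right-hand side is $s$-free (the inner term by global censoring, the outer averaging because it is over the $s$-free law of $Y$). Taking $g(y,c)=\min(y,c)$ --- or, to sidestep integrability, $g(y,c)=\min(y,c)\wedge M$ and then $M\to\infty$ by monotone convergence --- gives $m_0^a(X)=m_1^a(X)$.

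Combining the two steps,
\[
    \mathbb{E}[\psi^{\tn{IPW},\tilde{Y}}\mid X]
    = \bigl(m_1^1(X)-m_1^0(X)\bigr)-\bigl(m_0^1(X)-m_0^0(X)\bigr)=0,
\]
which is \eqref{eq:gcipwthm}. Note that this uses neither \Cref{asm:ooo} nor any model for $\bar G$ or $\bar F$, only Assumptions~\ref{asm:iva}, \ref{asm:evo}, and \ref{asm:gc}, consistent with the statement of \Cref{thm:gcThm2}.

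The step I expect to be the main obstacle is the joint-law transport above: \Cref{asm:gc} is phrased in terms of the \emph{observed} event time $Y$, so one must chain it with consistency and the two potential-outcome invariances in exactly the right order to conclude that the full law of $(Y,C)\mid(X,S,A)$ --- and not merely its $Y$-marginal --- coincides in the RCT and the OS. The only other wrinkle is integrability of $\tilde Y$, handled by the truncation argument above (or simply assumed, in line with the regularity hypotheses of \Cref{thm:thmMMR}).
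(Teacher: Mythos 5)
Your proposal is correct, and it is organized differently from the paper's argument even though it uses the same ingredients. The paper, like you, first reduces via the IPW identity to $\E{\tilde Y \mid X, S=s, A=a}$, but it then splits $\min(Y,C)$ over the events $\cbrc{Y\le C}$ and $\cbrc{Y>C}$ and proves study-invariance of the four resulting pieces separately: Lemma~\ref{suplem:gclem1} for the event probabilities and Lemma~\ref{suplem:gclem2} for the two conditional means, each by an explicit computation that chains Lemma~\ref{suplem:CateId} (consistency plus no unobserved confounding), Assumption~\ref{asm:evo}, and Assumption~\ref{asm:gc}. You instead prove a single, more general transport statement --- that $\E{g(Y,C)\mid X,S=s,A=a}$ is the same for $s=0$ and $s=1$ for bounded measurable $g$, obtained by exactly the same chain (consistency and no unobserved confounding pass to the law of $Y(a)$, ignorability of selection drops $S$ from the $Y$-marginal, global censoring drops $S$ from the law of $C$ given $(Y,X,A)$) --- and then specialize to $g(y,c)=\min(y,c)$ with a truncation/monotone-convergence step for integrability. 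Your version subsumes the paper's two lemmas (they correspond to $g(y,c)=\ind{y\le c}$ and its weighted variants), avoids the ratio manipulations inside Lemma~\ref{suplem:gclem2}, and makes explicit an integrability caveat the paper leaves implicit; the paper's event-wise route is more elementary and its lemmas are structured so they can be reused term-by-term in the proof of Theorem~\ref{thm:gcThm} for the CDR signal, which your single-functional formulation would not cover as directly since that signal involves the $s$-indexed nuisances $\bar G_{s,a}$ and $Q_{s,a}$.
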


$\psi^{\tn{IPW},\tilde{Y}}_s$ imputes censored outcomes {\em directly} with the censoring time. One could use another IPW signal after {\em dropping} the censored data, and also doubly-robust signals with additional mean outcome estimators ({\em e.g.}, for $\tilde{Y}$). We provide the details for alternative signals, which will also serve as \enquote{baselines} in the experiments, in \Cref{app:baseline_signals}. Note that even though we call them baselines for their naive approach, these signals have not been considered for this problem before. 

The weaker version of the external validity assumption {\em cannot} be tested under global censoring, whereas this was possible with conditionally independent censoring (see \Cref{prop:propMMR}). Our next result formalizes this, where we consider an even stronger alternative than the exchangeability of CATEs in \Cref{prop:propMMR}.  

\begin{restatable}[]{prop}{propNoExMepo}
   \label{prop:NoExMepo}
    Consider the same setup in Theorems \ref{thm:gcThm} and \ref{thm:gcThm2}, with the only difference being we assume
    \[
        \E{Y(a) \mid X,S=0} = \E{Y(a) \mid X,S=1},
    \]
    $\forall a \in \cbrc{0,1}$, instead of the stronger \enquote{ignorability of selection} in \Cref{asm:evo}. Then, \eqref{eq:gccdrthm} and \eqref{eq:gcipwthm} are not true in general.
\end{restatable}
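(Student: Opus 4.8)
The statement is negative---it asserts that two identities \emph{fail in general}---so the plan is to construct an explicit counterexample: a joint law for $(X,A,S,Y,C)$ satisfying internal validity (\Cref{asm:iva}), global censoring (\Cref{asm:gc}), and the weakened requirement $\E{Y(a)\mid X,S=0}=\E{Y(a)\mid X,S=1}$ for $a\in\cbrc{0,1}$, but for which $\E{\psi^{\tn{CDR}}\mid X}\ne 0$ and $\E{\psi^{\tn{IPW},\tilde{Y}}\mid X}\ne 0$ (with all nuisances at their population values, so the ``correctly estimated'' caveats are vacuous). To keep the bookkeeping trivial I would take $X$ degenerate (one stratum), let $A$ be randomized within each study with a fixed known propensity and with the potential outcomes independent of $A$ given $S$, and set $\prb{S=1\mid X}=\tfrac12$; then \Cref{asm:iva} holds automatically, and the means of \emph{both} arms matching makes the two CATEs coincide, so even the weaker exchangeability of \Cref{prop:propMMR} is in force. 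The whole point of dropping ``ignorability of selection'' is that the law of $Y(a)$ may now differ across $S$ while sharing a mean: I would let $Y(0)$ have a \emph{common} law in both studies and let $Y(1)\mid S=0$ and $Y(1)\mid S=1$ have equal means but different shapes (e.g.\ a point mass versus a two-point law). Global censoring is enforced by fixing the conditional law of $C$ given $(Y,X,A)$ once and reusing it in both studies.

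Refuting \eqref{eq:gcipwthm} is essentially a one-line computation. With exact weights, \eqref{eq:gcipw} and consistency give $\E{\psi^{\tn{IPW},\tilde{Y}}_s\mid X}=\E{\min(Y(1),C)\mid S=s}-\E{\min(Y(0),C)\mid S=s}$; the $Y(0)$-term does not depend on $s$ by construction, so $\E{\psi^{\tn{IPW},\tilde{Y}}\mid X}=\E{\min(Y(1),C)\mid S=1}-\E{\min(Y(1),C)\mid S=0}$. By the tail formula $\E{\min(Z,c)}=\int_0^c \prb{Z>t}\,dt$ for $Z\in\mathbb{R}_+$, a truncated (hence censored) mean is a \emph{nonlinear} functional of the survival curve, so two laws with equal mean but different survival functions give different truncated means. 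Concretely, take $C\equiv 1$, $Y(1)\mid S=1\equiv 1$, and $Y(1)\mid S=0$ equal to $\tfrac12$ or $\tfrac32$ with probability $\tfrac12$ each: the truncated means are $1$ and $\tfrac34$, hence $\E{\psi^{\tn{IPW},\tilde{Y}}\mid X}=\tfrac14\ne 0$. The same computation shows any naive $\tilde{Y}$-based signal inherits the defect.

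Refuting \eqref{eq:gccdrthm} needs one extra observation: a \emph{constant} $C$ will not do, because then \Cref{asm:cic} also holds, and since the two CATEs coincide the reasoning behind \Cref{prop:propMMR} still forces $\E{\psi^{\tn{CDR}}\mid X}=0$. One must use genuinely \emph{dependent}---yet still global---censoring, i.e.\ a nontrivial, everywhere-positive conditional law of $C$ given $Y$ (say $C=Y\cdot U$ with $U\sim\mathrm{Unif}(0,2)$, shared across studies), under which \Cref{lem:lemDR} no longer applies. A short iterated-expectations argument on \eqref{eq:dr_signal}--\eqref{eq:cdrSig} with nuisances at truth yields $\E{\psi^{\tn{CDR}}_{s,a}\mid X}=\E{\psi^{*}_{s,a}\mid X,S=s,A=a}$; under dependent censoring this equals $\E{Y(a)\mid X,S=s}+b_{s,a}(X)$, where $b_{s,a}(X)$ is the integral functional of the joint law of $(Y,C)\mid X,A=a,S=s$ that one reads off \eqref{eq:def_rubin_signal}, and is \emph{not} determined by $\E{Y(a)\mid X,S=s}$ alone. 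Since $Y(0)$ has a common law across studies, $b_{0,0}=b_{1,0}$ and the CATEs coincide, so $\E{\psi^{\tn{CDR}}\mid X}=b_{1,1}(X)-b_{0,1}(X)$, and it remains to choose the two $Y(1)$-laws (equal mean, different shape) so that this difference is nonzero---possible precisely because $b_{s,1}$ depends on $\bar{F}_{s,1}(t\mid X)$ pointwise in $t$, not merely through $\mu_{s,1}(X)=\int_0^\infty\bar{F}_{s,1}(t\mid X)\,dt$. One can also fold everything into a single dependent-censoring instance that breaks both displays at once.

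The main obstacle is this last step: making $b_{1,1}(X)\ne b_{0,1}(X)$ concrete. The clean route is to fix the smallest possible discrete instance (two or three support points for $Y$, a simple $C\mid Y$), evaluate the integral against $dG_{s,1}(c\mid X)$ in \eqref{eq:def_rubin_signal} in closed form for each of the two $Y(1)$-laws, and verify the two numbers differ; everything else---internal validity, global censoring, positivity of the relevant conditional survival functions (so $\psi^{*}_{s,a}$ is well defined, cf.~\Cref{asm:ooo}), exactness of the weights, and the mean-matching constraint---is arranged by fiat and requires no further argument. A secondary check is to confirm the chosen $C\mid Y$ keeps the \Cref{asm:ooo}-type support conditions intact so that both signals are well defined throughout.
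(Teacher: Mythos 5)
Your overall strategy is exactly the paper's: build an explicit law for $(X,A,S,Y,C)$ with $X$ degenerate, randomized treatment, a common law for $Y(0)$ across studies, two equal-mean but differently-shaped laws for $Y(1)$, and a censoring law $C\mid Y$ shared by both studies, then evaluate the signals at the true nuisances. Your refutation of \eqref{eq:gcipwthm} is complete and correct: with $C\equiv 1$, $Y(1)\mid S=1\equiv 1$ versus $Y(1)\mid S=0\in\cbrc{1/2,3/2}$ equiprobable, the censored means are $1$ and $3/4$, so $\E{\psi^{\tn{IPW},\tilde{Y}}\mid X}=1/4\neq 0$; constant censoring is admissible here because the $\tilde{Y}$-IPW signal involves no inverse-censoring weights and \Cref{asm:ooo} is not part of the setup of Theorems~\ref{thm:gcThm} and \ref{thm:gcThm2}. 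This is in fact a simpler instance than the paper's (which reuses a dependent censoring rule and computes $-59/60$), and your side remark that a constant $C$ cannot work for the CDR half—since then \Cref{asm:cic} holds and CATE equality forces the CDR moment to vanish—is a correct observation the paper does not state explicitly.

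The gap is in the CDR half. Your reduction $\E{\psi^{\tn{CDR}}_{s,a}\mid X}=\E{\psi^{*}_{s,a}\mid X,S=s,A=a}$ at true nuisances and the cancellation giving $\E{\psi^{\tn{CDR}}\mid X}=b_{1,1}(X)-b_{0,1}(X)$ are right (this mirrors the paper's step at \eqref{eq:oct3-1}), but you never exhibit laws for which $b_{1,1}\neq b_{0,1}$; you only assert this is "possible because $b_{s,1}$ depends on $\bar{F}_{s,1}(t\mid X)$ pointwise." Since the proposition is proved by counterexample, that deferred arithmetic \emph{is} the proof of this half, and as written it is missing. The paper completes it with a minimal discrete instance: keep the two-valued censoring rule ($C=10$ if $Y<3.5$, $C=1/2$ if $Y>3.5$), take $Y(1)\mid S=0$ degenerate at $2$ (never censored, so $\E{\psi^{\tn{CDR}}_{s=0,a=1}\mid X}=2$ and $b_{0,1}=0$) and $Y(1)\mid S=1$ uniform on $\cbrc{0,4}$ (censored exactly when $Y=4$); evaluating the Rubin signal \eqref{eq:def_rubin_signal}, including the integral against $dG_{s=1,a=1}(c\mid X)$, gives $\E{\psi^{\tn{CDR}}_{s=1,a=1}\mid X}=0$ while $\E{Y(1)\mid X,S=1}=2$, hence $\E{\psi^{\tn{CDR}}\mid X}=-2\neq 0$. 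Your recipe (smallest discrete instance, closed-form evaluation of the integral term) would land on essentially this computation, and your alternative $C=Y\cdot U$ construction would additionally require the support check you flag; but until one such evaluation is actually carried out and shown nonzero, the claim that \eqref{eq:gccdrthm} fails is not established.
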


\section{IHDP EXPERIMENTS} \label{sec:semiexp}
\begin{table*}[ht] 

\centering

\caption{{\em Semi-synthetic IHDP experiments with conditionally independent censoring}. Entries are the rejection rates of the null hypothesis (see \Cref{thm:thmMMR}) over 40 independent runs. $\lvert \beta_{\mathrm{prop}} \lvert$ quantifies the severity of unmeasured confounding, {\em i.e.}, internal validity violation (\hyperref[asm:iva]{A1}), and $\Delta \beta_{\mathrm{Cox}}$ the severity of external validity violation (\hyperref[asm:evo]{A2}).}

\label{table:cic_synth_exp}

\begin{tabular}{lccccccccccc}

\toprule
\toprule

\textbf{Setup \#} & \multicolumn{2}{c}{\textbf{1}} & \multicolumn{2}{c}{\textbf{2}}  & \multicolumn{2}{c}{\textbf{3}}  & \multicolumn{2}{c}{\textbf{4}}  & \multicolumn{2}{c}{\textbf{5}} \\

\cmidrule(lr){1-1} \cmidrule(lr){2-3} \cmidrule(lr){4-5} \cmidrule(lr){6-7} \cmidrule(lr){8-9} \cmidrule(lr){10-11}

\textbf{Assumption validity}
& 
\multicolumn{2}{c}{\textbf{\mgray{\hyperref[asm:iva]{A1}}} \cblue{\checkmark} \textbf{\mgray{\hyperref[asm:evo]{A2}}} \cblue{\checkmark}}  & 
\multicolumn{2}{c}{\textbf{\mgray{\hyperref[asm:iva]{A1}}} \cblue{\checkmark} \textbf{\mgray{\hyperref[asm:evo]{A2}}} \cred{\ding{55}} } & 
\multicolumn{2}{c}{\textbf{\mgray{\hyperref[asm:iva]{A1}}} \cblue{\checkmark} \textbf{\mgray{\hyperref[asm:evo]{A2}}} \cred{\ding{55}} } &
\multicolumn{2}{c}{\textbf{\mgray{\hyperref[asm:iva]{A1}}} \cred{\ding{55}} \textbf{\mgray{\hyperref[asm:evo]{A2}}}  \cblue{\checkmark}} & 
\multicolumn{2}{c}{\textbf{\mgray{\hyperref[asm:iva]{A1}}} \cred{\ding{55}} \textbf{\mgray{\hyperref[asm:evo]{A2}}} \cblue{\checkmark}}  
\\

\textbf{Violation severity} 
& \multicolumn{2}{c}{---} & \multicolumn{2}{c}{$\Delta \beta_{\mathrm{Cox}} = 0.2$}  & \multicolumn{2}{c}{$\Delta \beta_{\mathrm{Cox}} = 1$} & \multicolumn{2}{c}{$\lvert \beta_{\mathrm{prop}} \lvert = 1$}  & \multicolumn{2}{c}{$\lvert \beta_{\mathrm{prop}} \lvert = 2.5$}   \\

\textbf{Metric}
& \multicolumn{2}{c}{\textcolor{red}{\textbf{Type-1 error}}} & \multicolumn{2}{c}{\cblue{\textbf{Power}}}  & \multicolumn{2}{c}{\cblue{\textbf{Power}}}  & \multicolumn{2}{c}{\cblue{\textbf{Power}}}  & \multicolumn{2}{c}{\cblue{\textbf{Power}}}  \\

\cmidrule(lr){1-1} \cmidrule(lr){2-3} \cmidrule(lr){4-5} \cmidrule(lr){6-7} \cmidrule(lr){8-9} \cmidrule(lr){10-11}

\textbf{OS sample size}, $n_1$ & \textbf{985} & \textbf{2955} & \textbf{985} & \textbf{2955} & \textbf{985} & \textbf{2955} & \textbf{985} & \textbf{2955} & \textbf{985} & \textbf{2955} & \\

\cmidrule(lr){1-1} \cmidrule(lr){2-3} \cmidrule(lr){4-5} \cmidrule(lr){6-7} \cmidrule(lr){8-9} \cmidrule(lr){10-11}
$\tn{DR}-\tilde{Y}$ & 1 & 1 & 0.35 & 0.375 & 1 & 1 & 1 & 1 & 1 & 1 \\
$\tn{DR}-Y$ & 1 & 1 & 0.55 & 0.6 & 0.85 & 0.95 & 1 & 1 & 1 & 1 \\
$\tn{IPW}-\tilde{Y}$ & 1 & 1 & 0 & 0 & 0.125 & 0.35 & 1 & 1 & 1 & 1 \\
$\tn{IPW}-Y$ & 0.9 & 1 & 0 & 0 & 0 & 0.025 & 1 & 1 & 1 & 1 \\

\textbf{IPCW} & 0 & 0 & 0 & 0 & 0.425 & 0.925 & 0.025 & 0.05 & 0.825 & 0.95 \\
\textbf{CDR} & 0 & 0.025 & 0.2 & 0.3 & 0.9 & 0.975 & 0.275 & 0.425 & 0.8 & 0.85 \\

\bottomrule
\bottomrule

\end{tabular}
\end{table*}
The Infant Health and Development Program (IHDP) is an RCT that studied the effect of professional home visits on cognitive abilities in premature infants, with a sample size of 985 \citep{brooks1992effects}. We use covariate information from the IHDP trial to form semi-synthetic OS and RCT cohorts. We then simulate the binary treatment assignments, time-to-event outcomes, and censoring times based on the covariate information, as detailed in \Cref{ssec:dgp}. The simulations cover settings with different validity assumption violations and censoring mechanisms: conditionally independent and global. Using the simulated data, we compute various signals, including the $\psi^{\tn{CDR}}$ signal in \eqref{eq:cdrSig} and some rudimentary alternatives to serve as baselines, which are described in \Cref{ssec:abl}. We then conduct falsification tests for the validity assumptions using different signals as $\psi$ signal in the 3rd step of \Cref{alg:high_level} and compare the type-1 errors and powers. Our code is available at \url{https://github.com/demireal/censored-mmr}.
\subsection{Data-Generating Process} \label{ssec:dgp}
For a patient with covariates $X_i$ in study $S_i$, we sample a binary treatment $A_i \sim \texttt{Bernoulli} \left( P(A=1 \lvert X_i,S_i) \right)$. The propensity score $P(A \lvert X, S)$ is set to a sigmoid function for both studies. Then we sample time-to-event $Y_i$ and censoring time $C_i$ outcomes according to the survival functions $\bar{F}_{S_i,A_i}(t \lvert X_i)$ and $\bar{G}_{S_i,A_i}(t \lvert X_i)$. We adopt a CoxPH framework to model the effect of covariates $X$ and specify $\bar{F}_{S,A}(t \lvert X)$ and $\bar{G}_{S,A}(t \lvert X)$ as
\begin{equation} \label{eq:surv_func_exp}
    \overline{W}_0 (t; \lambda, p)^{\exp(X^{\intercal} \beta_{\mathrm{Cox}})},
\end{equation}
where $\overline{W}_0 (t; \lambda, p)$ is the {\em Weibull} baseline survival function, and the parameters $\lambda$, $p$, and $\beta$ can be set differently for each study $S$ and treatment group $A$. Exact expressions and specific parametrizations used in the experiments can be found in \Cref{app:ihdp}.
\subsection{Ablation Studies} \label{ssec:abl}
We perform ablation studies to measure the efficiency of our CDR signal in \eqref{eq:cdrSig} for testing the validity assumptions. For comparison, we propose \enquote{baselines} with no component to model censoring: $\tn{IPW}-Y$, $\tn{DR}-Y$, $\tn{IPW}-\tilde{Y}$, $\tn{DR}-\tilde{Y}$ (detailed in \Cref{app:baseline_signals}). $\tn{IPW}-Y$ and $\tn{DR}-Y$ drop the censored data.  $\tn{IPW}-\tilde{Y}$ and $\tn{DR}-\tilde{Y}$ impute the missing time-to-event with censoring time. For instance, the $\tn{IPW}-\tilde{Y}$ baseline uses $\psi = \psi^{\tn{IPW},\tilde{Y}}$ signal \eqref{eq:gcipw} in the 3rd step of \Cref{alg:high_level}. DR baselines employ additional estimators for imputed or uncensored outcomes. We also adopt an inverse propensity of censoring-weighted (IPCW) signal that accounts for censoring by inverse-weighting with $\bar{G}$. 

The significance level of the tests is set to $\alpha = 0.05$ as the desired type-1 error threshold. The synthetic RCT cohort size is the original IHDP cohort size $n_0 = 985$. We experiment with two OS cohort sizes $n_1 = 985$ and $n_1 = 2955$, where we copy the covariate data from the IHDP three times for the former. Even though the covariates are repeated, treatment and time-to-event generation processes still involve stochasticity. 
\subsection{Conditionally Independent Censoring} \label{sssec:cic_exp}
We follow the data-generating process in \Cref{ssec:dgp}, ensuring that censoring Assumptions~\ref{asm:cic} and \ref{asm:ooo} hold and consider various violations of validity assumptions. The results are presented in \Cref{table:cic_synth_exp} and \Cref{fig:cicsweep}. 

We start with setup \#1, where validity assumptions hold. The baselines have very high type-1 errors, falsifying the OS despite being compatible with the RCT. Increasing the sample size does not alleviate the problem, as baselines' CATE estimates are asymptotically biased. The IPCW and CDR signals boast significantly lower type-1 errors, maintaining the test level of 0.05.

Next, we consider the violation of the external validity assumption \mgray{\hyperref[asm:evo]{A2}}; where one of the $\beta_{\mathrm{Cox}}$ parameters in \eqref{eq:surv_func_exp} is different between the RCT and OS (setups \#2 and \#3 in \Cref{table:cic_synth_exp}, and top left in \Cref{fig:cicsweep}). By $\Delta \beta_{\mathrm{Cox}}$, we denote the magnitude of the difference, where a larger value causes a more severe violation. In addition to having high type-1 errors, IPW-based baselines also suffer from low power. This behavior can be expected, {\em e.g.}, if the bias introduced by naively handling the censored data cancels part of the bias from violating external validity or due to the high variance of IPW-based estimators. IPCW reacts to more severe violations of external validity; however, it cannot detect milder violations. The CDR signal enjoys higher power as a meaningful complement to its low type-1 error. 

We then consider the violation of internal validity \mgray{\hyperref[asm:iva]{A1}} in the OS by introducing \enquote{unobserved confounding} (UC) (setups \#4 and \#5 in \Cref{table:cic_synth_exp} and top middle in \Cref{fig:cicsweep}). We conceal the confounding covariate \enquote{sex} and adjust the violation severity through its effect on the propensity score of treatment, captured by $\beta_{\mathrm{prop}}$. We observe that the CDR signal can detect unobserved confounding, an ability pronounced by increased sample size and violation severity. As before, IPCW does not react when the violation is subtle.

Overall, falsification with CDR signal has the most reliable performance with low type-1 error and ability to detect milder violations. We also verify its doubly-robust property in \Cref{supssec:drtest}. Further, in \Cref{supssec:witfunc}, we show that a \enquote{witness function} may provide explanations by revealing the regions of $\mathcal{X}$ where CATE estimates from RCT and OS differ the most.
\subsection{Global Censoring} \label{sssec:global_cen}
To simulate the global censoring mechanism, we censor patients whose time-to-event outcome exceeds a threshold by setting the censoring time to a smaller value than the threshold through the same mechanism in RCT and OS. We present the type-1 errors and powers at varying levels of validity violations in \Cref{fig:cicsweep}. In contrast to the conditionally independent censoring case, all signals maintain low type-1 errors, corroborating the theory of \Cref{ssec:fwgc}. IPW and IPCW signals have lower type-1 errors than their doubly-robust counterparts; however, they are not as well-powered to detect violations of the validity assumptions.
\begin{figure}[t]
\centering
\includegraphics[width=.48\textwidth]{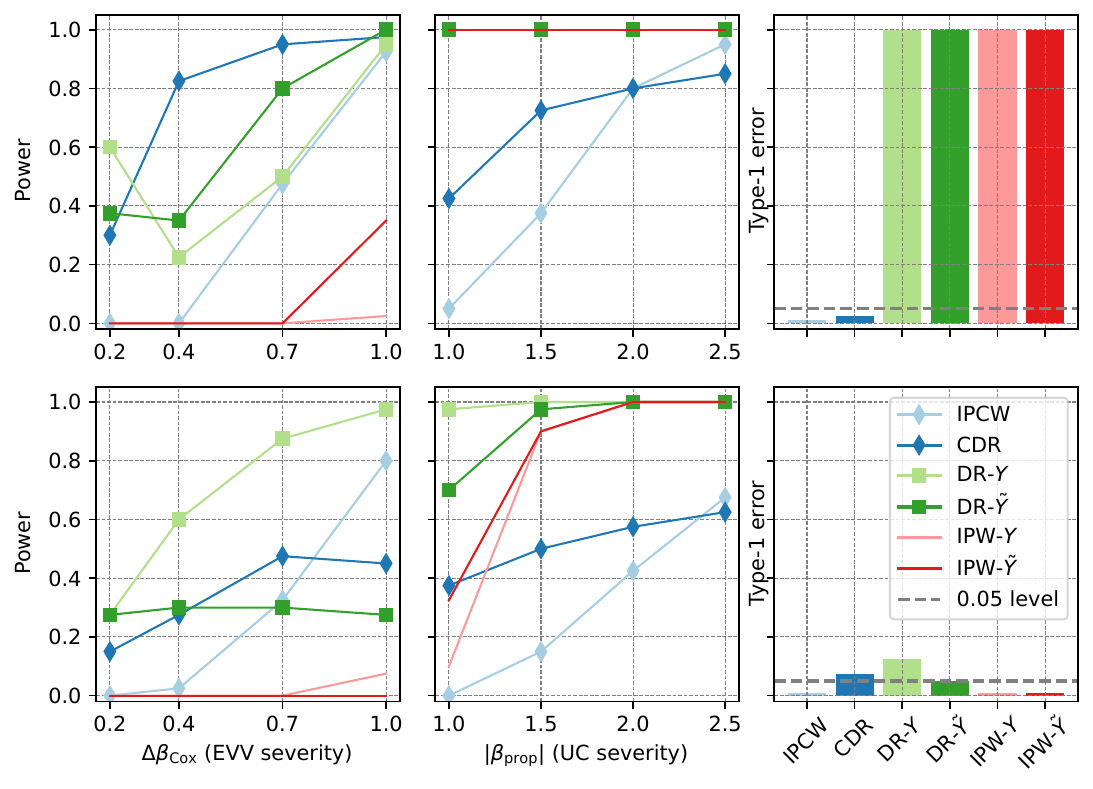}
\caption{{\em Top row:} \mgray{\hyperref[asm:cic]{Conditionally independent censoring}} results. {\em Bottom row:} \mgray{\hyperref[asm:gc]{Global censoring}} results. EVV = External validity violation (\mgray{\hyperref[asm:evo]{A2}}). UC = Unobserved confounding (\mgray{\hyperref[asm:iva]{A1}}). OS size is $n_1 = 2955$.}
\label{fig:cicsweep}
\rule{\linewidth}{.75pt} 
\end{figure}
\section{WHI EXPERIMENTS} \label{sec:whi}
The Women's Health Initiative (WHI) was launched in the early 1990s to study various health outcomes in postmenopausal women. Previous studies noted discrepancies between the RCT and OS components of the WHI \citep{prentice2005combined}. We focus on the effect of combination hormone therapy on a composite outcome: the minimum time-to-event among multiple endpoints such as heart failure, cancer, and death. 

Similar to \citet{hussain2023falsification}, we limit the follow-up to seven years since the treatment assignment. While previous studies discarded censoring by binarizing the time-to-event outcome and imputing the outcome for censored patients with $Y=0$, our framework allows us to use the true outcomes and explicitly model the censoring component. We used 954 features available in the RCT and OS and performed principal component (PC) analysis to alleviate collinearity-related instabilities in Cox regression models (350 PCs, capturing 90\% of the variance, details in~\Cref{app:whi}). We split the data into ten folds, estimate the nuisance functions using nine folds (step 1 in \Cref{alg:high_level}), and perform the MMR test with the remaining fold. We repeat ten times and report the average rejection rates. 

\begin{table}[ht]
  \centering
  \caption{WHI experiments. Average rejection rate over 10-folds for different fractions of selection biases $f$.}
  \begin{tabular}{lcccc}
    \toprule
    \toprule
               & $\tn{IPW}-\tilde{Y}$ & $\tn{DR}-\tilde{Y}$  & IPCW & CDR   \\
    \midrule
     $f=0$   & 0   & 0.5   & 0 & 0.6 \\
     $f=0.1$   & 0.1   & 0.2   & 0.1 & 0.9  \\
     $f=0.25$   & 0   & 0.4   & 0.3 & 0.9  \\
    \bottomrule
    \bottomrule
  \end{tabular}
  \label{tab:whi}
\end{table}

We also simulate selection bias by dropping $f \in (0,1)$ fraction of patients from the RCT's control group that experienced an event. The results are presented in \Cref{tab:whi}. IPW-based methods have lower rejection rates, while the CDR signal has the highest rejection rate, which increases with the selection bias $f$.
\section{RELATED WORK IN EPIDEMIOLOGY}
Evaluating the internal and external validity of RCTs and OSes has been of significant interest in the epidemiology literature. Here, we surface some references for the interested reader.

One elegant concept is {\em negative outcomes} \citep{lipsitch2010negative, sofer2016negative}. A negative outcome is known or expected to be unaffected by the treatment. Therefore, a significant difference in a negative outcome between the treatment and control groups may point to flaws in the study design. For instance, \citet{dagan2021bnt162b2} leverage the fact that the COVID-19 vaccine should not have a significant effect within the first few days of administration to guide their adjustment for confounders in the observational data.

\citet{viele2014use} investigate methods to incorporate {\em historical controls} into the design of new RCTs to make them more efficient. Their \enquote{test-then-pool} procedure first compares the historical controls to trial controls before pooling them. \citet{Hartman2015-td} study estimating the {\em population} average treatment effect {\em on the treated} from an RCT, where the population of interest is defined by an OS cohort. Their approach involves a placebo test in the first step to gauge the generalizability of the trial to the target population. \citet{de2014testing} show how an alternative set of causal assumptions on {\em instrumental variables} can be used to construct a test for the no unmeasured confounding assumption in an OS.

We close by noting \citet{forbes2020benchmarking} and \citet{wang2023emulation}. They provide thorough empirical evidence regarding the compatibility of RCTs and their OS counterparts by analyzing complementary findings from an extensive set of studies in the literature.
\section{CONCLUSION}
We developed a framework to test the validity of an OS by benchmarking it against an RCT, when the outcomes are right-censored. We considered the common conditionally independent censoring condition and introduced a novel one: global censoring. We demonstrated that naively handling the censoring leads to unreliable tests. In contrast, our censoring-doubly-robust signal facilitated tests with low type-1 error and high sensitivity to violations of the validity assumptions under both censoring scenarios, making it a promising candidate for generally applicable benchmarking procedures under different censoring scenarios.
\subsubsection*{Acknowledgments}
The authors thank the anonymous reviewers for their helpful suggestions and Ming-Chieh Shih for discussions during the earlier versions of the manuscript. ID was supported by funding from the Eric and Wendy Schmidt Center at the Broad Institute of MIT and Harvard. ZH was supported by an ASPIRE award from The Mark Foundation for Cancer Research and by the National Cancer Institute of the National Institutes of Health under Award Number F30CA268631. The content is solely the responsibility of the authors and does not necessarily represent the official views of the National Institutes of Health. MO and DS were supported in part by Office of Naval Research Award No. N00014-21- 1-2807. EDB was funded by a FWO-SB PhD grant. This manuscript was prepared using WHI-CTOS Research Materials obtained from the National Heart, Lung, and Blood Institute (NHLBI) Biologic Specimen and Data Repository Information Coordinating Center and does not necessarily reflect the opinions or views of the WHI-CTOS or the NHLBI.
\bibliography{ref}
\bibliographystyle{plainnat}

%%%%%%%%%%%%%%%%%%%%%%%%%%%%%%%%%%%%%%%%%%%%%%%%%%%%%%%%%%%%
\section*{Checklist}
%%%% BEGIN INSTRUCTIONS %%%
%The checklist follows the references. For each question, choose your answer from the three possible options: Yes, No, Not Applicable.  You are encouraged to include a justification to your answer, either by referencing the appropriate section of your paper or providing a brief inline description (1-2 sentences). 
%Please do not modify the questions.  Note that the Checklist section does not count towards the page limit. Not including the checklist in the first submission won't result in desk rejection, although in such case we will ask you to upload it during the author response period and include it in camera ready (if accepted).

%\textbf{In your paper, please delete this instructions block and only keep the Checklist section heading above along with the questions/answers below.}
% %%% END INSTRUCTIONS %%%

 \begin{enumerate}

 \item For all models and algorithms presented, check if you include:
 \begin{enumerate}
   \item A clear description of the mathematical setting, assumptions, algorithm, and/or model. [Yes. See Sections~\ref{sec:notset} and \ref{sec:prelim}.]
   \item An analysis of the properties and complexity (time, space, sample size) of any algorithm. [Yes. See Section~\ref{sec:prelim}.]
   \item (Optional) Anonymized source code, with specification of all dependencies, including external libraries. [Yes. Link to the code repository is provided in \Cref{sec:semiexp}.]
 \end{enumerate}

 \item For any theoretical claim, check if you include:
 \begin{enumerate}
   \item Statements of the full set of assumptions of all theoretical results. [Yes. All the assumptions and theoretical results are listed in the main paper in Sections~\ref{sec:notset} and \ref{sec:prelim}.]
   \item Complete proofs of all theoretical results. [Yes. All the proofs are included in \Cref{sec:proofs}.]
   \item Clear explanations of any assumptions. [Yes. The motivations/insights behind the assumptions are listed either before or after stating the critical assumptions.]     
 \end{enumerate}

 \item For all figures and tables that present empirical results, check if you include:
 \begin{enumerate}
   \item The code, data, and instructions needed to reproduce the main experimental results (either in the supplemental material or as a URL). [Yes. See the repository link in \Cref{sec:semiexp}]
   \item All the training details ({\em e.g.}, data splits, hyperparameters, how they were chosen). [Yes. See Sections~\ref{sec:semiexp}, \ref{sec:whi} and \Cref{app:datasets}]
         \item A clear definition of the specific measure or statistics and error bars ({\em e.g.}, with respect to the random seed after running experiments multiple times). [Yes]
         \item A description of the computing infrastructure used. ({\em e.g.}, type of GPUs, internal cluster, or cloud provider). [Yes]
 \end{enumerate}

 \item If you are using existing assets ({\em e.g.}, code, data, models) or curating/releasing new assets, check if you include:
 \begin{enumerate}
   \item Citations of the creator If your work uses existing assets. [Not Applicable]
   \item The license information of the assets, if applicable. [Not Applicable]
   \item New assets either in the supplemental material or as a URL, if applicable. [See the repository link in \Cref{sec:semiexp}]
   \item Information about consent from data providers/curators. [Yes. See \Cref{app:whi}]
   \item Discussion of sensible content if applicable, {\em e.g.}, personally identifiable information or offensive content. [Yes. See \Cref{app:whi}]
 \end{enumerate}

 \item If you used crowdsourcing or conducted research with human subjects, check if you include:
 \begin{enumerate}
   \item The full text of instructions given to participants and screenshots. [Not Applicable]
   \item Descriptions of potential participant risks, with links to Institutional Review Board (IRB) approvals if applicable. [Not Applicable]
   \item The estimated hourly wage paid to participants and the total amount spent on participant compensation. [Not Applicable]
 \end{enumerate}

 \end{enumerate}

\clearpage
\onecolumn
\appendix

\addcontentsline{toc}{section}{Appendix} % Add the appendix text to the document TOC
\part{Appendix} % Start the appendix part
\parttoc % Insert the appendix TOC
\newtheorem{suplem}{Lemma}[section]
\section{PROOFS} \label{sec:proofs}

\begin{restatable}[]{suplem}{suplemCateId}
    \label{suplem:CateId}
    Suppose that Assumption \ref{asm:iva} holds. We have, $\forall s, a \in \cbrc{0,1}$
    \begin{equation*}
        P(Y(a) \lvert X, S=s) = P(Y \lvert X, S=s, A=a)
    \end{equation*}
\end{restatable}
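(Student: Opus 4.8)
The plan is to run the standard identification argument that expresses a potential-outcome distribution in terms of observables, carried out within each study stratum $S=s$. Fix $s,a \in \{0,1\}$ and condition on $S=s$ throughout. I would phrase the claimed equality of conditional laws concretely, e.g. as the identity $\mathbb{E}[h(Y(a)) \mid X, S=s] = \mathbb{E}[h(Y) \mid X, S=s, A=a]$ for every bounded measurable $h$ (equivalently, equality of the regular conditional distributions, or of the conditional CDFs $t \mapsto P(Y(a) \le t \mid X, S=s)$ and $t \mapsto P(Y \le t \mid X, S=s, A=a)$), and then verify it via a short chain of three steps, each corresponding to one bullet of \Cref{dfn:ivr}.

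First I would invoke positivity of treatment assignment from \Cref{dfn:ivr}: since $P(A=a \mid X=x, S=s) > 0$ for all $x \in \mathcal{X}$, conditioning on the event $\{A=a\}$ is well defined $P_X$-almost surely and no stratum is degenerate. Second, by no unobserved confounding, $Y(a) \indep A \mid X, S=s$, so further conditioning on $\{A=a\}$ does not change the conditional law of $Y(a)$ given $(X,S=s)$; that is, $P(Y(a) \mid X, S=s) = P(Y(a) \mid X, S=s, A=a)$. Third, by consistency, on the event $\{A=a\}$ we have $Y = Y(a)$ pointwise, hence the conditional law of $Y(a)$ given $(X, S=s, A=a)$ equals that of the observed $Y$ given $(X, S=s, A=a)$. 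Composing the three equalities gives $P(Y(a) \mid X, S=s) = P(Y \mid X, S=s, A=a)$, which is the assertion.

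I do not anticipate a substantive obstacle; the only care required is the measure-theoretic bookkeeping — using positivity to ensure the conditioning events carry positive probability, and stating ``equality of conditional distributions'' rigorously (via test functions or regular conditional distributions) rather than informally. This lemma is the full-distribution analogue of the familiar mean identity $\mathbb{E}[Y(a)\mid X,S=s] = \mathbb{E}[Y\mid X,S=s,A=a]$, and it is exactly what downstream results need: it legitimizes identifying $\bar F_{s,a}(t\mid X)$ with the potential-outcome survival function and hence computing quantities such as $\mu_{s,a}(X) = \mathbb{E}[Y\mid X,S=s,A=a]$ by integrating $\bar F_{s,a}(\cdot \mid X)$, as used in \Cref{cor:corDR} and \Cref{thm:thmDRtwo}.
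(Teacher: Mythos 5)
Your argument is correct and matches the paper's own proof, which is exactly the two-step chain of adding $A=a$ to the conditioning via no unobserved confounding and then replacing $Y(a)$ by $Y$ via consistency; your additional remarks on positivity and the test-function formulation are just careful bookkeeping around the same identity. No differences of substance.
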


\begin{proof}
    \begin{align*}
    P(Y(a) \lvert X, S=s) &= P(Y(a) \lvert X, A=a, S=s) \\
    &= P(Y \lvert X, A=a, S=s)
\end{align*}
by no unobserved confounding and consistency.
\end{proof}
\subsection{Lemma 1}
\lemDR*
\begin{proof}
    \begin{align*}
        \E{\psi^*_{s,a} \lvert X,S=s,A=a} &= \E{Y \lvert X,S=s,A=a} \tag{Theorem 1 in \cite{rubin2007doubly}} \\ 
        &= \E{Y(a) \lvert X,S=s} \tag{Lemma~\ref{suplem:CateId}}
    \end{align*} 
\end{proof}
\subsection{Corollary 1}
\corDR*
\begin{proof}
    Let us denote by $\hat{P}(A=a \lvert X, S=s)$ the estimated propensity score.
    \begin{align*}
        &  \E{\frac{\ind{S=s,A=a} \psi^*_{s,a}}{P(S=s \lvert X) \hat{P}(A=a \lvert X, S=s)} \Big\lvert X}\\
        &= \E{\frac{ \psi^*_{s,a}}{P(S=s \lvert X) \hat{P}(A=a \lvert X, S=s)} \Big\lvert X, S=s, A=a} P(S=s, A=a \lvert X) \\
        &= \E{\psi^*_{s,a} \lvert X, S=s, A=a} \\
        &= \E{Y(a) \lvert X,S=s} \tag{Lemma~\ref{lem:lemDR}}
    \end{align*}
    where the probability terms cancel out since $\hat{P}(A=a \lvert X, S=s) = P(A=a \lvert X, S=s)$ is correctly estimated.
\end{proof}
\subsection{Theorem 1}
\thmDRtwo*
\begin{proof}
Let us denote the estimates for the nuisance functions with
\begin{equation*}
    \hat{P} (A=a \lvert X,S=s),\quad \hat{G}_{s,a} (t \lvert X),\quad \hat{F}_{s,a} (t \lvert X),\quad \hat{\mu}_{s,a} (X)
\end{equation*}
and we have
\begin{align*} 
    \hat{\psi}_{s,a}^{\tn{CDR}} = \frac{\ind{S=s}}{P(S=s \mid X)} \times \left( \frac{\ind{A=a} (\hat{\psi}^{*}_{s, a} - \hat{\mu}_{s, a}(X))}{\hat{P}(A=a \mid X,S=s)} + \hat{\mu}_{s, a}(X) \right)
\end{align*}
First, assume that the survival function of the time-to-event outcome is correctly estimated but the survival function of the censoring time and the propensity score are not. That is,
\begin{align}
    \hat{F}_{s,a} (t \lvert X) &= \bar{F}_{s,a} (t \lvert X) \label{eq:fcorrs0} \\
    \hat{\mu}_{s,a} (X) &= \mu_{s,a} (X) \label{eq:fcorrs} \\
    \hat{G}_{s,a} (t \lvert X) &\neq \bar{G}_{s,a} (t \lvert X) \nonumber \\
    \hat{P} (A=a \lvert X, S=s) &\neq P (A=a \lvert X, S=s) \nonumber
\end{align}
Note that
\begin{align}
    \mu_{s,a} (X) &= \E{Y \lvert X, S=s, A=a} \nonumber \\
    &= \E{Y(a) \lvert X, S=s} \nonumber \\
    &= \E{\hat{\psi}^*_{s,a} \lvert X, S=s, A=a} \label{eq:suppeq2}
\end{align}
where the second equality is by Lemma~\ref{suplem:CateId} and the third by \Cref{lem:lemDR} thanks to \eqref{eq:fcorrs0}. We have
\begin{align}
    \E{\hat{\psi}_{s,a}^{\tn{CDR}} \lvert X} &= \E{\frac{\ind{S=s, A=a} \left(\hat{\psi}^*_{s,a} - \hat{\mu}_{s,a} (X) \right)}{P(S=s \lvert X) \hat{P} (A=a \lvert X,S=s)} \Big\lvert X} + \E{\frac{\ind{S=s} \hat{\mu}_{s,a}(X)}{P(S=s \lvert X)} \Big\lvert X} \nonumber \\
    & = \E{\hat{\psi}^*_{s,a} - \hat{\mu}_{s,a} (X) \lvert X,S=s,A=a} \frac{P(A=a \lvert X,S=s)}{\hat{P}(A=a \lvert X,S=s)} + \hat{\mu}_{s,a}(X) \nonumber \\
    & = \underbrace{\left(\E{\hat{\psi}^*_{s,a} \lvert X,S=s,A=a} -\mu_{s,a}(X) \right)}_{0~\tn{by \eqref{eq:suppeq2}}} \frac{P(A=a \lvert X,S=s)}{\hat{P}(A=a \lvert X,S=s)} + \mu_{s,a}(X) \label{eq:lem4_1} \\
    & = \E{Y(a) \lvert X,S=s} \tag{Lemma~\ref{suplem:CateId}}
\end{align}
where \eqref{eq:lem4_1} follows from \eqref{eq:fcorrs}. Next, assume that the propensity score and the survival function of the censoring time is correctly estimated, but the survival function of the outcome is not,
\begin{align}
    \hat{P} (A=a \lvert X, S=s) &= P (A=a \lvert X, S=s) \label{eq:corspec3} \\ 
    \hat{G} (t \lvert X, S=s, A=a) &= \bar{G} (t \lvert X, S=s, A=a) \label{eq:corspec4} \\ 
    \hat{F} (t \lvert X, S=s, A=a) &\neq \bar{F} (t \lvert X, S=s, A=a) \nonumber
\end{align}
We have
\begin{align}
    \E{\hat{\psi}_{s,a}^{\tn{CDR}} \lvert X} &= \mathbb{E} \bigg[ \frac{\ind{S=s, A=a} \hat{\psi}^*_{s,a}}{P(S=s \lvert X) \hat{P} (A=a \lvert X,S=s)}  - \frac{\ind{S=s, A=a} \hat{\mu}_{s,a} (X)}{P(S=s \lvert X) \hat{P}(A=a \lvert X, S=s)} + \frac{\ind{S=s}\hat{\mu}_{s,a}(X)}{P(S=s \lvert X)} \Big\lvert X  \bigg] \nonumber \\
    &\hspace{-30pt} = \E{\frac{\ind{S=s, A=a} \hat{\psi}^*_{s,a}}{P(S=s,A=a\lvert X)} \Big\lvert X} - \E{\frac{\ind{S=s} \hat{\mu}_{s,a} (X)}{P(S=s \lvert X)} \left( \frac{\ind{A=a} - P (A=a \lvert X,S=s)}{P (A=a \lvert X,S=s)} \right) \Big\lvert X} \label{eq:lem4_3} \\
    &\hspace{-30pt} =\E{\hat{\psi}^*_{s,a} \lvert X,S=s,A=a} - \hat{\mu}_{s,a} (X) \times \underbrace{\E{ \left( \frac{\ind{A=a} - P (A=a \lvert X,S=s)}{P (A=a \lvert X,S=s)} \right) \Big\lvert X, S=s}}_{0} \nonumber  \\
    &\hspace{-30pt} = \E{Y(a) \lvert X,S=s} \label{eq:lem4_4} 
\end{align}
where we have \eqref{eq:lem4_3} by \eqref{eq:corspec3}, and \eqref{eq:lem4_4} follows from Lemma~\ref{lem:lemDR} since we have \eqref{eq:corspec4}.
\end{proof}
\subsection{Theorem 2}
\prethmMMR*
\begin{proof}
Note that $\tn{CATE} (X,0) = \E{Y(1) - Y(0) \lvert X, S=0}$.
\begin{align*}
    \E{\psi_{s=1}^{\tn{CDR}} \lvert X} &= \E{\psi_{s=1,a=1}^{\tn{CDR}} \lvert X} - \E{\psi_{s=1,a=0}^{\tn{CDR}} \lvert X} \nonumber \\
    &= \E{Y(1) \lvert X, S=1} - \E{Y(0) \lvert X, S=1} \tag{\Cref{thm:thmDRtwo}} \\
    &= \E{Y(1) \lvert X, S=0} - \E{Y(0) \lvert X, S=0}  \tag{Ignorability of selection, \Cref{asm:evo}} \\
    &= \E{\psi_{s=0}^{\tn{CDR}} \lvert X} \tag{by symmetry}
\end{align*}
\end{proof}
\subsection{Proposition 1}
\propMMR*
\begin{proof}
    The proof is identical to the proof of \Cref{thm:prethmMMR}, only difference being we invoke the alternative external validity assumption CATE$(X,0)$ = CATE$(X,1)$ in the third step, instead of the stronger ignorability of selection.
\end{proof}
\subsection{Theorem 3}
\thmMMR*
\begin{proof}
     By \Cref{thm:prethmMMR} and \eqref{eq:cdrSig}, we have
     \[
         \E{\psi \lvert X} = \E{\psi^{\tn{CDR}} \lvert X} = 0
     \]
    The hypothesis test results then follow from Theorem 3.1 in \citet{hussain2023falsification}.
\end{proof}

\begin{restatable}[]{suplem}{gclem0}
    \label{suplem:gclem0}
    Suppose that Assumptions \ref{asm:iva}, \ref{asm:evo}, \ref{asm:gc} hold. We have, for all $a \in \cbrc{0,1}$ and $t \in \mathbb{R}_+$
    \begin{align*} 
        \bar{G}_{s=0,a} (t \lvert X) &= \bar{G}_{s=1,a} (t \lvert X) \\ 
        Q_{s=0,a} (X,t) &= Q_{s=1,a} (X,t) \\ 
    \end{align*}
\end{restatable}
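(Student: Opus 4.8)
\textbf{Proof proposal for Lemma \ref{suplem:gclem0}.}

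The plan is to prove the two claimed equalities in turn, both as direct consequences of the global-censoring assumption (\Cref{asm:gc}) together with the external-validity assumption (\Cref{asm:evo}). For the first equality, I would start from the definition $\bar G_{s,a}(t \mid X) = \prb{C > t \mid X, S=s, A=a}$ and rewrite it by conditioning additionally on $Y$, i.e.\ integrate $\prb{C > t \mid Y, X, S=s, A=a}$ against the law of $Y$ given $(X, S=s, A=a)$. By \Cref{asm:gc}, $C \indep S \mid Y, X, A$, so the inner conditional probability does not depend on $s$: $\prb{C > t \mid Y, X, S=s, A=a} = \prb{C > t \mid Y, X, A=a}$. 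It then remains to handle the mixing measure. The distribution of $Y$ given $(X, S=s, A=a)$ equals the distribution of $Y(a)$ given $(X, S=s)$ by internal validity (\Cref{suplem:CateId}), which in turn is independent of $s$ by ignorability of selection in \Cref{asm:evo}. Hence the integral is the same for $s=0$ and $s=1$, giving $\bar G_{0,a}(t\mid X) = \bar G_{1,a}(t \mid X)$.

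For the second equality, recall $Q_{s,a}(X,c) = \E{Y \mid X, Y > c, S=s, A=a}$. I would argue that the conditional law of $Y$ given $(X, Y>c, S=s, A=a)$ is the same for both studies. First, the law of $Y$ given $(X, S=s, A=a)$ is $s$-invariant by the same \Cref{suplem:CateId}-plus-\Cref{asm:evo} argument as above; conditioning further on the event $\{Y > c\}$, which is measurable with respect to $Y$ alone, preserves this invariance. Therefore the conditional expectation $\E{Y \mid X, Y>c, S=s, A=a}$ does not depend on $s$, which is exactly $Q_{0,a}(X,c) = Q_{1,a}(X,c)$. (Note that the first part about $\bar G$ is not even needed for this second claim; it follows purely from internal and external validity.)

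The main obstacle — really the only subtlety — is being careful about the order and scope of conditioning in the first part: one must condition on $Y$ \emph{inside} the probability before invoking \Cref{asm:gc}, and then separately argue that the marginal over $Y$ (given covariates and study) is study-invariant. A secondary point to state cleanly is that all these equalities are asserted almost surely in $P_X$, consistent with the blanket convention in the paper, and that the support/positivity conditions ensure the conditional expectations defining $Q_{s,a}$ are well-defined on the relevant events. Everything else is a routine application of the tower property and the independence statements already assumed.
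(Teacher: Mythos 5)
Your proposal is correct and follows essentially the same route as the paper's proof: decompose $\bar G_{s,a}(t\mid X)$ by conditioning on $Y$, apply \Cref{asm:gc} to the inner term and Lemma~\ref{suplem:CateId} plus \Cref{asm:evo} to the law of $Y$ given $(X,S,A)$, and for $Q_{s,a}$ use the $s$-invariance of that law (the paper writes it via the ratio $P(Y(a)=y, Y(a)>t\mid X,S)/P(Y(a)>t\mid X,S)$, which is your "conditioning on $\{Y>t\}$ preserves invariance" step). Your observation that the second equality needs only internal and external validity, not global censoring, matches the paper, which likewise cites only \Cref{asm:evo} there.
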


\begin{proof}
    Let us start with the first statement.
    \begin{align*}
        \bar{G}_{s=0,a} (t \lvert X) &= P(C > t \lvert X, S=0, A=a) \\
        &= \sum_y P(C > t \lvert Y=y, X, S=0, A=a) P(Y=y \lvert X, S=0, A=a) \\ 
        &= \sum_y P(C > t \lvert Y=y, X, S=0, A=a) P(Y(a)=y \lvert X, S=0) \tag{Lemma~\ref{suplem:CateId}} \\ 
        &= \sum_y P(C > t \lvert Y=y, X, S=1, A=a) P(Y(a)=y \lvert X, S=1) \tag{Assumptions \ref{asm:evo}, \ref{asm:gc}} \\
        &= \bar{G}_{s=1,a} (t \lvert X) \tag{by symmetry}
    \end{align*}
    For the second statement, we write
    \begin{align*}
        Q_{s=0,a} (X,t) &= \E{Y \lvert X, Y>t, S=0,A=a} \\
        &= \sum_y y P(Y=y \lvert X, Y>t, S=0, A=a) \\
        &= \sum_y y \frac{P(Y(a)=y, Y(a)>t \lvert X, S=0)}{P(Y(a)>t \lvert X, S=0)} \tag{Lemma~\ref{suplem:CateId}} \\
        &= \sum_y y \frac{P(Y(a)=y, Y(a)>t \lvert X, S=1)}{P(Y(a)>t \lvert X, S=1)} \tag{Assumption \ref{asm:evo}} \\            
        &= Q_{s=1,a} (X,t) \tag{by symmetry}
    \end{align*}
\end{proof} 

\begin{restatable}[]{suplem}{gclem1}
    \label{suplem:gclem1}
    Suppose that Assumptions \ref{asm:iva}, \ref{asm:evo}, \ref{asm:gc} hold. We have
    \begin{equation*} 
        P(Y\leq C \lvert X,S=0,A) = P(Y \leq C \lvert X,S=1,A) 
    \end{equation*}
\end{restatable}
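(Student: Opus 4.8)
The plan is to mirror the structure of \Cref{suplem:gclem0}: express $P(Y \leq C \mid X, S=s, A)$ by conditioning on the value of $Y$, rewrite the resulting objects in terms of potential-outcome quantities using \Cref{suplem:CateId} (no unobserved confounding plus consistency), and then swap $S=0$ for $S=1$ using the global censoring assumption \ref{asm:gc} together with ignorability of selection \ref{asm:evo}.

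Concretely, first I would write
\[
P(Y \leq C \mid X, S=s, A=a) = \sum_{y} P(C \geq y \mid Y = y, X, S=s, A=a)\, P(Y = y \mid X, S=s, A=a),
\]
taking care with whether we need $C \geq Y$ or $C > Y$ — since $\Delta = \ind{Y \leq C}$ in \eqref{eq:ytilde_def}, the event of interest is $\{C \geq Y\}$, and in the continuous case the distinction is immaterial, so I would not belabour it. Next, by \Cref{suplem:CateId} the marginal factor satisfies $P(Y = y \mid X, S=s, A=a) = P(Y(a) = y \mid X, S=s)$, and by \Cref{asm:gc} (i.e., $C \indep S \mid Y, X, A$) the conditional factor $P(C \geq y \mid Y=y, X, S=s, A=a)$ does not depend on $s$ at all. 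Then I would apply ignorability of selection (\Cref{asm:evo}), which gives $P(Y(a) = y \mid X, S=0) = P(Y(a) = y \mid X, S=1)$, so every term in the sum is invariant under $s \mapsto 1-s$; summing over $y$ and then noting the identity holds for each $a \in \cbrc{0,1}$ completes the argument. Essentially this is a one-line corollary of the two displayed facts already proved in \Cref{suplem:gclem0}, since $P(Y \leq C \mid X, S, A) = \mathbb{E}[\bar{G}_{S,A}(Y^- \mid X) \mid X, S, A]$-type reasoning combined with the equality of both $\bar{G}$ and the conditional law of $Y$ across studies.

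I do not anticipate a serious obstacle here; the only thing requiring mild care is bookkeeping at the boundary $\{C = Y\}$ and making sure the conditional survival function of $C$ given $Y=y$ is the right object (it is $\bar G$ evaluated appropriately, but conditioned on $Y=y$ rather than marginalized), so that \Cref{asm:gc} applies verbatim. If the outcome is not assumed continuous, I would phrase everything in terms of $P(C \geq y \mid Y=y, \ldots)$ and avoid ever invoking $\bar{G}_{s,a}(t\mid X)$ directly, since that object integrates over $Y$ and the cleaner route is to condition on $Y=y$ first and only then marginalize. The whole proof should be three or four lines of display math with the same annotations ("Lemma~\ref{suplem:CateId}", "Assumptions \ref{asm:evo}, \ref{asm:gc}", "by symmetry") used in the preceding lemma's proof.
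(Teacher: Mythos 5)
Your proposal is correct and follows essentially the same route as the paper's proof: condition on $Y=y$, rewrite $P(Y=y \mid X,S=s,A)$ as $P(Y(A)=y \mid X,S=s)$ via Lemma~\ref{suplem:CateId}, make the censoring factor $P(y \leq C \mid Y=y,X,S=s,A)$ study-independent via Assumption~\ref{asm:gc}, and equate the outcome laws across studies via Assumption~\ref{asm:evo} before summing back up. Your instinct to condition on $Y=y$ first rather than argue through $\bar{G}_{s,a}(t \mid X)$ (which marginalizes over $Y$) is exactly how the paper handles it.
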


\begin{proof}
    \begin{align*}
        P(Y\leq C \lvert X,S=0,A) 
        &= \sum_{y} P(y \leq C \lvert Y=y,X,S=0,A) P(Y=y \lvert X,S=0,A) \nonumber \\
        &= \sum_{y} P(y \leq C \lvert Y=y,X,S=0,A) P(Y(A)=y \lvert X,S=0) \tag{Lemma \ref{suplem:CateId}} \\
        &= \sum_{y} P(y \leq C \lvert Y=y,X,S=1,A) P(Y(A)=y \lvert X,S=1) \tag{Assumptions~\ref{asm:evo}, \ref{asm:gc}} \\
        &= P(Y\leq C \lvert X,S=1,A)  \tag{by symmetry}
    \end{align*}
\end{proof}

\begin{restatable}[]{suplem}{gclem2}
    \label{suplem:gclem2}
    Suppose that Assumptions \ref{asm:iva}, \ref{asm:evo}, \ref{asm:gc} hold. We have
    \begin{align*} 
        \E{Y \lvert X, S=0, A, Y\leq C} &= \E{Y \lvert X, S=1, A, Y\leq C} \\
        \E{C \lvert X, S=0, A, Y > C} &= \E{C \lvert X, S=1, A, Y > C}
    \end{align*}
\end{restatable}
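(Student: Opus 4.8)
The plan is to show that Assumptions~\ref{asm:iva}, \ref{asm:evo}, and \ref{asm:gc} force the entire \emph{joint} conditional law of the pair $(Y,C)$ given $(X,A)$ to be the same in the RCT ($S=0$) and in the OS ($S=1$); both displayed identities are then immediate, since $\cbrc{Y\leq C}$ and $\cbrc{Y>C}$ are events measurable with respect to $(Y,C)$, and $Y$ (resp. $C$) is itself a measurable function of $(Y,C)$.

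Concretely, first I would factor, for each pair $(y,c)$,
\[
 P(Y=y,C=c \mid X,S=s,A) = P(C=c \mid Y=y,X,A,S=s)\cdot P(Y=y \mid X,A,S=s).
\]
For the first factor, global censoring ($C \indep S \mid Y,X,A$, Assumption~\ref{asm:gc}) removes the dependence on $S$, giving $P(C=c \mid Y=y,X,A,S=s) = P(C=c \mid Y=y,X,A)$. For the second factor, Lemma~\ref{suplem:CateId} (internal validity, Assumption~\ref{asm:iva}) gives $P(Y=y \mid X,A,S=s)=P(Y(A)=y \mid X,S=s)$, and ignorability of selection (Assumption~\ref{asm:evo}) gives $P(Y(A)=y\mid X,S=0)=P(Y(A)=y\mid X,S=1)$. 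Multiplying the two now $S$-free factors back together yields $P(Y=y,C=c\mid X,S=0,A)=P(Y=y,C=c\mid X,S=1,A)$, i.e. the joint conditional law does not depend on the study.

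Given this, the two equalities follow by conditioning: $P(Y=y\mid X,S=s,A,Y\leq C) = P(Y=y,\,y\leq C\mid X,S=s,A)/P(Y\leq C\mid X,S=s,A)$, where the numerator is read off the $S$-invariant joint law and the denominator is exactly the quantity already shown to be $S$-invariant in Lemma~\ref{suplem:gclem1}; hence $\E{Y\mid X,S=0,A,Y\leq C}=\sum_y y\,P(Y=y\mid X,S=0,A,Y\leq C)$ equals the same expression with $S=1$. The argument for $\E{C\mid X,S,A,Y>C}$ is symmetric: condition instead on $\cbrc{Y>C}$ and average $C$. As an alternative one can skip the joint-law viewpoint and mirror the proofs of Lemmas~\ref{suplem:gclem0} and~\ref{suplem:gclem1} line by line, handling the event $\cbrc{Y\leq C}$ (or $\cbrc{Y>C}$) by summing over the value of $Y$ (or $C$) and peeling off the conditional censoring/outcome probability before applying Assumptions~\ref{asm:evo} and~\ref{asm:gc}.

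There is no serious obstacle here; the only care needed is bookkeeping. One must ensure the conditioning events $\cbrc{Y\leq C}$ and $\cbrc{Y>C}$ have positive probability given $(X,S,A)$ so the conditional expectations are defined (the former is harmless, and $P(Y>C\mid X,S,A)>0$ on the relevant support follows from the support/censoring assumptions; otherwise the identity is read as holding wherever both sides are defined), and one must apply global censoring in the correct direction — it drops $S$ from $P(C\mid Y,X,A,S)$, not from $P(Y\mid C,X,A,S)$, which is precisely why factoring the joint law as ``$C$ given $Y$, then $Y$'' is the convenient choice. Everything else is a routine repetition of manipulations already carried out for Lemmas~\ref{suplem:gclem0}--\ref{suplem:gclem1}.
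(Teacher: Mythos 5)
Your proposal is correct, and it reaches the result by a cleaner, more unified route than the paper. You first establish that the whole joint conditional law of $(Y,C)$ given $(X,A)$ is the same in both studies, by factoring $P(Y=y,C=c\mid X,S,A)=P(C=c\mid Y=y,X,A,S)\,P(Y=y\mid X,A,S)$, stripping $S$ from the first factor via Assumption~\ref{asm:gc} and from the second via Lemma~\ref{suplem:CateId} plus ignorability of selection; both displayed identities (and in fact Lemmas~\ref{suplem:gclem0} and~\ref{suplem:gclem1} as well) then drop out as immediate corollaries, since they are functionals of that invariant joint law. The paper never states this joint-law invariance; instead it proves each identity separately by expanding the conditional expectation with Bayes' rule as a sum over $y$ (and, for the $\E{C\mid\cdot,Y>C}$ statement, a further sum over $c$), peeling off $P(\cdot\le C\mid Y=y,X,S,A)$ with Assumption~\ref{asm:gc} and $P(Y(A)=y\mid X,S)$ with Lemma~\ref{suplem:CateId} and Assumption~\ref{asm:evo}, and invoking the separately proven Lemma~\ref{suplem:gclem1} for the normalizing denominator --- exactly the ``line by line'' alternative you mention. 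The underlying mechanism (factor as $C$ given $Y$, then $Y$; kill $S$ in each factor) is identical, so your argument buys economy and a reusable invariance principle rather than new assumptions, while the paper's version is more explicit bookkeeping at the level of each term. Your side remarks are also apt: the conditional expectations are only defined where $P(Y\le C\mid X,S,A)$ or $P(Y>C\mid X,S,A)$ is positive (the paper silently assumes this), and global censoring must indeed be applied to $P(C\mid Y,X,A,S)$ and not the reverse conditional, which is precisely the factorization both you and the paper choose.
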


\begin{proof}
For the first statement we write
\begin{align}
    \E{Y \lvert X, S=0, A, Y \leq C} &= \sum_y y P(Y=y \lvert X,S=0,A,Y \leq C) \nonumber \\
    &\hspace{-50pt}= \sum_y y \frac{P(Y \leq C \lvert Y=y,X,S=0,A) P(Y=y \lvert X,S=0,A=a)}{P(Y \leq C \lvert X,S=0,A)} \nonumber \\
    &\hspace{-50pt}= \sum_y y \frac{P(y \leq C \lvert Y=y,X,S=0,A) P(Y(A)=y \lvert X,S=0)}{P(Y \leq C \lvert X,S=0,A)} \tag{Lemma~\ref{suplem:CateId}} \\
    &\hspace{-50pt}= \sum_y y \frac{P(y \leq C \lvert Y=y,X,S=1,A) P(Y(A)=y \lvert X,S=1)}{P(Y \leq C \lvert X,S=1,A)} \tag{Lemma~\ref{suplem:gclem1}, Assumptions \ref{asm:evo},\ref{asm:gc}} \\
    &\hspace{-50pt}= \E{Y \lvert X, S=1, A, Y \leq C} \tag{by symmetry} 
\end{align}
For the second statement we write
\begin{align}
    &\E{C \lvert X, S=0, A, Y > C}  \nonumber \\
    =& \sum_y \E{C \lvert X, S=0, A, Y > C, Y=y} P(Y=y \lvert X, S=0, A, Y > C) \nonumber \\
    =& \sum_y \E{C \lvert X, S=0, A, Y > C, Y=y} \frac{P(y > C \lvert Y=y,X,S=0,A) P(Y=y \lvert X,S=0,A)}{P(Y > C \lvert X,S=0,A)} \nonumber \\
    =& \sum_y \E{C \lvert X, S=0, A, Y > C, Y=y} \frac{P(y > C \lvert Y=y,X,S=0,A) P(Y(A)=y \lvert X,S=0)}{P(Y > C \lvert X,S=0,A)} \tag{Lemma~\ref{suplem:CateId}} \\
    =& \sum_y \E{C \lvert X, S=0, A, Y > C, Y=y} \underbrace{\frac{P(y > C \lvert Y=y,X,S=1,A) P(Y(A)=y \lvert X,S=1)}{P(Y > C \lvert X,S=1,A)}}_{\eqqcolon g(X,A,Y=y,S=1)} \tag{Lemma~\ref{suplem:gclem1}, Assumptions \ref{asm:evo}, \ref{asm:gc}} \\
    =& \sum_y g(X,A,Y=y,S=1) \sum_{c:y> c} c P(C=c \lvert X,S=0,A,Y> C,Y=y) \nonumber \\
    =& \sum_y g(X,A,Y=y,S=1) \sum_{c:y> c} c \frac{P(y> C, C=c \lvert X,S=0,A,Y=y)} {P(y> C \lvert X,S=0,A,Y=y)} \nonumber \\
    =& \sum_y g(X,A,Y=y,S=1) \sum_{c:y> c} c \frac{\overbrace{P(y > c \lvert X,S=0,A,C=c,Y=y)}^{=1} P(C=c \lvert X,S=0,A,Y=y)} {P(y > C \lvert X,S=0,A,Y=y)} \nonumber \\
    =& \sum_{y,c: y > c} g(X,A,Y=y,S=1) \frac{c P(C=c \lvert X,S=1,A,Y=y)} {P(y > C \lvert X,S=1,A,Y=y)} \tag{\Cref{asm:gc}} \\
    =&\E{C \lvert X, S=1, A, Y > C} \tag{by symmetry}
\end{align}
\end{proof}
\subsection{Theorem 4}
\gcThm*
\begin{proof}
If we can show that $\E{\psi^{\tn{CDR}}_{s=0,a} \lvert X} = \E{\psi^{\tn{CDR}}_{s=1,a} \lvert X}$ for all $a \in \cbrc{0,1}$, we are done since $\E{\psi^{\tn{CDR}}_{s} \lvert X} = \E{\psi^{\tn{CDR}}_{s,a=1} - \psi^{\tn{CDR}}_{s,a=0} \lvert X}$ for all $s \in \cbrc{0,1}$.

Recall that $\psi_{s,a}^{\tn{CDR}}$ is defined in~\cref{eq:dr_signal} as 
\begin{equation*}
\psi_{s,a}^{\tn{CDR}} \coloneqq \frac{\ind{S=s}}{P(S=s \mid X)} 
\times \left( \frac{\ind{A=a} (\psi^{*}_{s, a} - \mu_{s, a}(X))}{P(A=a \mid X,S=s)} + \mu_{s, a}(X) \right)
\end{equation*}

Allowing us to write
\begin{align}
    &\E{\psi_{s=0,a}^{\tn{CDR}} \lvert X} \nonumber \\
    = &\E{\frac{\ind{S=0, A=a} \psi^*_{s=0,a}}{P (S=0, A=a \lvert X)} \Big\lvert X} - \Cancel{\underbrace{\E{\frac{\ind{S=0, A=a} \mu_{s=0,a} (X)}{P(S=0,A=a \lvert X)} \Big\lvert X}}_{\mu_{s=0,a} (X)}} + \Cancel{\underbrace{\E{\frac{\ind{S=0} \mu_{s=0,a}(X)}{P(S=0 \lvert X)} \Big\lvert X}}_{\mu_{s=0,a} (X)}}  \nonumber \\
    = &\E{\psi^*_{s=0,a} \Big\lvert X, S=0, A=a}  \nonumber \\
    = &\E{\underbrace{\frac{\ind{\Delta=1} Y}{\bar{G}_{s=0,a}(Y \lvert X)}}_{\tn{Term 1}} + \underbrace{\frac{\ind{\Delta=0} Q_{s=0, a}(X,C)}{\bar{G}_{s=0,a}(C \lvert X)}}_{\tn{Term 2}} - \underbrace{\int_{-\infty}^{\tilde{Y}} \frac{Q_{s=0, a}(X, c)}{\bar{G}_{s=0,a}^2(c \lvert X)} dG_{s=0,a}(c \lvert X)}_{\tn{Term 3}} \Big\lvert X,S=0,A=a} \label{eq:oct3-1}
\end{align}
where in the last line, we have substituted the definition of $\psi^{*}_{s=0, a}$ from~\cref{eq:def_rubin_signal}.
We will proceed separately for each term. Note that $\ind{\Delta=1} = \ind{Y \leq C}$.
\begin{align*}
    &\E{\frac{\ind{\Delta=1} Y}{\bar{G}_{s=0,a}(Y \lvert X)} \Big\lvert X,S=0,A=a} \\
    =& \E{\frac{Y}{\bar{G}_{s=0,a}(Y \lvert X)} \Big\lvert X,S=0,A=a, Y \leq C} P(Y \leq C \lvert X,S=0,A=a) \\
    =&\sum_y \frac{y}{\bar{G}_{s=0,a}(y \lvert X)} P(Y=y \lvert X,S=0,A=a, Y \leq C) P(Y \leq C \lvert X,S=0,A=a) \\
    =&\sum_y \frac{y}{\bar{G}_{s=0,a}(y \lvert X)} P(Y \leq C  \lvert Y=y, X,S=0,A=a) P(Y=y \lvert X,S=0,A=a) \\
    =&\sum_y \frac{y}{\bar{G}_{s=0,a}(y \lvert X)} P(y \leq C  \lvert Y=y, X,S=0,A=a) P(Y(a)=y \lvert X,S=0) \tag{Lemma~\ref{suplem:CateId}} \\
    =&\sum_y \frac{y}{\bar{G}_{s=1,a}(y \lvert X)} P(y \leq C  \lvert Y=y, X,S=1,A=a) P(Y(a)=y \lvert X,S=1) \tag{Lemma~\ref{suplem:gclem0}, Assumptions~\ref{asm:evo}, \ref{asm:gc}} \\
    =&\E{\frac{\ind{\Delta=1} Y}{\bar{G}_{s=1,a}(Y \lvert X)} \Big\lvert X,S=1,A=a} \tag{by symmetry}
\end{align*}
We continue with Term 2
\begin{align*}
    &\E{\frac{\ind{\Delta=0} Q_{s=0,a} (X,C)}{\bar{G}_{s=0,a}(C \lvert X)} \Big\lvert X,S=0,A=a} \\
    =&\E{\frac{Q_{s=0,a} (X,C)}{\bar{G}_{s=0,a}(C \lvert X)} \Big\lvert X,S=0,A=a,Y>C} P(Y > C \lvert X,S=0,A=a) \\
    =&\sum_c \frac{Q_{s=0,a} (X,c)}{\bar{G}_{s=0,a}(c \lvert X)} P(C=c \lvert X,S=0,A=a,Y>C) P(Y > C \lvert X,S=0,A=a) \\ 
    =&\sum_c \frac{Q_{s=1,a} (X,c)}{\bar{G}_{s=1,a}(c \lvert X)} P(C=c, Y>C \lvert X,S=0,A=a) \tag{Lemma~\ref{suplem:gclem0}}\\
    =&\sum_c \frac{Q_{s=1,a} (X,c)}{\bar{G}_{s=1,a}(c \lvert X)} \sum_y P(C=c, Y>C \lvert Y=y, X,S=0,A=a) P(Y=y \lvert X,S=0,A=a) \\
    =&\sum_c \frac{Q_{s=1,a} (X,c)}{\bar{G}_{s=1,a}(c \lvert X)} \sum_y P(C=c, y>C \lvert Y=y, X,S=0,A=a) P(Y(a)=y \lvert X,S=0) \tag{Lemma \ref{suplem:CateId}} \\
    =&\sum_c \frac{Q_{s=1,a} (X,c)}{\bar{G}_{s=1,a}(c \lvert X)} \sum_y P(C=c, y>C \lvert Y=y, X,S=1,A=a) P(Y(a)=y \lvert X,S=1) \tag{Assumptions~\ref{asm:evo}, \ref{asm:gc} } \\
    =&\E{\frac{\ind{\Delta=0} Q_{s=1,a} (X,C)}{\bar{G}_{s=1,a}(C \lvert X)} \Big\lvert X,S=1,A=a} \tag{by symmetry}
\end{align*}
We continue with Term 3
\begin{align*}
    &\E{\int_{-\infty}^{\tilde{Y}} \frac{Q_{s=0, a}(X, c)}{\bar{G}_{s=0,a}^2(c \lvert X)} dG_{s=0,a}(c \lvert X)\Big\lvert X,S=0,A=a} \\
    &=\sum_{y,c} \mathbb{E} \bigg[\int_{-\infty}^{\tilde{y}} \frac{Q_{s=0, a}(X, c)}{\bar{G}_{s=0,a}^2(c \lvert X)} dG_{s=0,a}(c \lvert X)\Big\lvert Y=y,C=c,X,S=0,A=a\bigg] P(Y=y,C=c \lvert X,S=0,A=a) \\
    &=\sum_{y,c} \left( \int_{-\infty}^{\tilde{y}} \frac{Q_{s=0, a}(X, c)}{\bar{G}_{s=0,a}^2(c \lvert X)} dG_{s=0,a}(c \lvert X) \right) P(C=c \lvert Y=y, X,S=0,A=a) P(Y(a)=y \lvert X,S=0) \\
    &=\sum_{y,c} \left( \int_{-\infty}^{\tilde{y}} \frac{Q_{s=1, a}(X, c)}{\bar{G}_{s=1,a}^2(c \lvert X)} dG_{s=1,a}(c \lvert X) \right) P(C=c \lvert Y=y, X,S=1,A=a) P(Y(a)=y \lvert X,S=1) \\
    &= \E{\int_{-\infty}^{\tilde{Y}} \frac{Q_{s=1, a}(X, c)}{\bar{G}_{s=1,a}^2(c \lvert X)} dG_{s=1,a}(c \lvert X)\Big\lvert X,S=1,A=a} \tag{by symmetry}
\end{align*}
where the second equality is by Lemma~\ref{suplem:CateId} and the third by Lemma~\ref{suplem:gclem0} and Assumptions~\ref{asm:evo} and \ref{asm:gc}.
\end{proof}
\subsection{Theorem 5}
\gcThmtwo*
\begin{proof}
    We write the following $\forall a \in \cbrc{0,1}$
    \begin{align}
    &\E{\frac{\ind{S=0,A=a} \tilde{Y}}{P(S=0,A=a \lvert X)} \Big\lvert X} \nonumber \\ 
        &= \E{\tilde{Y} \lvert X, S=0, A=a} \nonumber \\
        &= \E{\min (Y,C) \lvert X, S=0, A=a} \nonumber \\
        &= \E{Y \lvert X,S=0,A=a,Y\leq C} P(Y\leq C \lvert X,S=0,A=a) \nonumber \\
        &\hspace{30pt} + \E{C \lvert X,S=0,A=a,Y>C} P(Y>C \lvert X,S=0,A=a) \nonumber  \\ 
        &= \E{Y \lvert X,S=1,A=a,Y\leq C} P(Y\leq C \lvert X,S=1,A=a) \label{eq:sep16-2} \\
        &\hspace{30pt} + \E{C \lvert X,S=1,A=a,Y>C} P(Y>C \lvert X,S=1,A=a) \tag{Lemmas~\ref{suplem:gclem1} and \ref{suplem:gclem2}} \nonumber \\ 
        &=\E{\frac{\ind{S=1,A=a} \tilde{Y}}{P(S=1,A=a \lvert X)} \Big\lvert X} \tag{by symmetry}
    \end{align}
    which immediately gives us
    \begin{equation*}
        \E{\psi^{\tn{IPW}, \tilde{Y}}_{s=0} \lvert X} = \E{\psi^{\tn{IPW}, \tilde{Y}}_{s=1} \lvert X}
    \end{equation*}
    by definition of $\psi_{s}^{\tn{IPW}, \tilde{Y}}$ and we are done.
\end{proof}
\subsection{Proposition 2}
\propNoExMepo*
\begin{proof}
In the first part of this proof, where we show that \eqref{eq:gcipwthm} is not true in general, we will consider the following distributions for the potential outcomes which satisfy the \enquote{mean exchangeability of the potential outcomes} (the condition given in the statement of the proposition) but not the ignorability of selection condition in \Cref{asm:evo}
\begin{equation} \label{eq:poDistL4}
\begin{aligned}
    P(Y(0) = n \lvert S=0) &=
    \begin{dcases}
        1/5 & n \in \cbrc{1,2,3,4,5} \\
        0 & \tn{otherwise}
    \end{dcases} \\
    P(Y(1) = n \lvert S=0) &=
    \begin{dcases}
        1/5 & n \in \cbrc{2,3,4,5,6} \\
        0 & \tn{otherwise}
    \end{dcases} \\
    P(Y(0) = n \lvert S=1) &=
    \begin{dcases}
        1/3 & n \in \cbrc{2,3,4} \\
        0 & \tn{otherwise}
    \end{dcases} \\
    P(Y(1) = n \lvert S=1) &=
    \begin{dcases}
        1/6 & n = 2 \\
        1/2 & n = 4 \\
        1/3 & n = 5 \\
        0 & \tn{otherwise}
    \end{dcases} \\
\end{aligned}
\end{equation}
and the following {\em conditional} distributions for the censoring time
\begin{equation} \label{eq:cenDistL4}
\begin{aligned}
        P(C = n \lvert Y < 3.5) &=
    \begin{dcases}
        1 & n = 10 \\
        0 & \tn{otherwise}
    \end{dcases} \\
    P(C = n \lvert Y > 3.5) &=
    \begin{dcases}
        1 & n = 1/2 \\
        0 & \tn{otherwise}
    \end{dcases} \\
\end{aligned}
\end{equation}
For simplicity, we further assume that 
\begin{equation} \label{eq:L4fura}
\begin{aligned}
    Y(0), Y&(1) \indep X, A \lvert S \\
    C & \indep X,S,A \lvert Y
\end{aligned}
\end{equation}
Note that \eqref{eq:cenDistL4} and \eqref{eq:L4fura} are stricter versions of no unobserved confounding (\Cref{asm:iva}) and global censoring (Assumption~\ref{asm:gc}). We then note, by \eqref{eq:poDistL4}, the following
\begin{equation} \label{eq:meanPOL4}
\begin{aligned}
    \E{Y(0) \lvert X, S=0} &= 3 \\ 
    \E{Y(1) \lvert X, S=0} &= 4 \\ 
    \E{Y(0) \lvert X, S=1} &= 3 \\ 
    \E{Y(1) \lvert X, S=1} &= 4 \\ 
\end{aligned}
\end{equation}
That is, mean exchangeability of the potential outcomes $\E{Y(a) \lvert X,S=0} = \E{Y(a) \lvert X,S=1}$, $\forall a \in \cbrc{0,1}$ holds. Note that, however, ignorability of selection (in \Cref{asm:evo}) is violated (see \eqref{eq:poDistL4}). Thanks to \eqref{eq:L4fura}, we can simply calculate
\begin{equation} \label{eq:L4interprobcalc}
\begin{aligned}
    P(Y\leq C \lvert X, S=0, A=0) &= \frac{3}{5} \\
    P(Y\leq C \lvert X, S=0, A=1) &= \frac{2}{5} \\
    P(Y\leq C \lvert X, S=1, A=0) &= \frac{2}{3} \\
    P(Y\leq C \lvert X, S=1, A=1) &= \frac{1}{6}
\end{aligned}
\end{equation}
since, {\em e.g.}, when $S=0$ and $A=0$, we have $Y \leq C$ if and only if $Y(0) \in \{1,2,3\}$, which happens with probability $3 \times 1/5 = 3/5$. Next, we have
\begin{equation} \label{eq:L4means}
\begin{aligned}
    \E{Y \lvert X, S=0, A=0, Y\leq C} &= \frac{1 + 2 + 3}{3} = 2 \\
    \E{Y \lvert X, S=0, A=1, Y\leq C} &= \frac{2 + 3}{2} = 5/2 \\
    \E{Y \lvert X, S=1, A=0, Y\leq C} &= \frac{2 + 3}{2} = 5/2 \\
    \E{Y \lvert X, S=1, A=1, Y\leq C} &= 2
\end{aligned}
\end{equation}
and
\begin{equation} \label{eq:L4means_C}
    \E{C \lvert X,S,A, Y>C} = \frac{1}{2}
\end{equation}
since $C = 1/2$ almost surely whenever $Y>C$. We are now ready to calculate 
\begin{align*}
    \E{\psi^{\tn{IPW}, \tilde{Y}} \lvert X} 
    &=\E{ \left(\psi^{\tn{IPW}, \tilde{Y}}_{s=1,a=1} - \psi^{\tn{IPW}, \tilde{Y}}_{s=1,a=0} \right) - \left( \psi^{\tn{IPW}, \tilde{Y}}_{s=0,a=1} - \psi^{\tn{IPW}, \tilde{Y}}_{s=0,a=0} \right) \bigg\lvert X}
\end{align*}
where
\[
\psi^{\tn{IPW}, \tilde{Y}}_{s,a} = \frac{\ind{S=s,A=a} \tilde{Y}}{P(S=s,A=a \lvert X)}
\]
From \eqref{eq:sep16-2}, we have
\begin{align*}
    \E{\psi^{\tn{IPW}}_{S,A} \lvert X} = \E{Y \lvert X,S,A,Y\leq C} P(Y\leq C \lvert X,S,A) + \E{C \lvert X,S,A,Y>C} P(Y>C \lvert X,S,A)
\end{align*}
Plugging in the values calculated in \eqref{eq:poDistL4}, \eqref{eq:cenDistL4}, \eqref{eq:L4means}, \eqref{eq:L4means_C}
\begin{equation} \label{eq:ipw-impute-means}
\begin{aligned}
    \E{\psi^{\tn{IPW}, \tilde{Y}}_{s=0,a=0} \lvert X} &= \frac{3}{5} \cdot 2 + \frac{2}{5} \cdot \frac{1}{2} = \frac{7}{5} \\ 
    \E{\psi^{\tn{IPW}, \tilde{Y}}_{s=0,a=1} \lvert X} &= \frac{2}{5} \cdot \frac{5}{2} + \frac{3}{5} \cdot \frac{1}{2} = \frac{13}{10} \\
    \E{\psi^{\tn{IPW}, \tilde{Y}}_{s=1,a=0} \lvert X} &= \frac{2}{3} \cdot \frac{5}{2} + \frac{1}{3} \cdot \frac{1}{2} = \frac{11}{6} \\
    \E{\psi^{\tn{IPW}, \tilde{Y}}_{s=1,a=1} \lvert X} &= \frac{1}{6} \cdot 2 + \frac{5}{6} \cdot \frac{1}{2} = \frac{3}{4} \\
\end{aligned}
\end{equation}
We are done since
\begin{align*}
    \E{\psi^{\tn{IPW}, \tilde{Y}} \lvert X} = \left(\frac{3}{4} - \frac{11}{6} \right) - \left(\frac{13}{10} - \frac{7}{5} \right) = -\frac{59}{60} \neq 0 
\end{align*}
Next, we show that \eqref{eq:gccdrthm} is not true in general. We keep the same setup above, only changing the marginal distributions of the potential outcomes to the following for simplicity
\begin{equation} \label{eq:poDistL5}
\begin{aligned}
    P(Y(0) = n \lvert S=0) &=
    \begin{dcases}
        1 & n = 0 \\
        0 & \tn{otherwise}
    \end{dcases} \\
    P(Y(1) = n \lvert S=0) &=
    \begin{dcases}
        1 & n = 2 \\
        0 & \tn{otherwise}
    \end{dcases} \\
    P(Y(0) = n \lvert S=1) &=
    \begin{dcases}
        1 & n = 0 \\
        0 & \tn{otherwise}
    \end{dcases} \\
    P(Y(1) = n \lvert S=1) &=
    \begin{dcases}
        \frac{1}{2} & n \in \cbrc{0, 4} \\
        0 & \tn{otherwise}
    \end{dcases} \\
\end{aligned}
\end{equation}
Note that the mean exchangeability of potential outcomes again hold, but not the ignorability of selection in \Cref{asm:evo}.

When $(s=0,a=0)$, $(s=0,a=1)$, or $(s=1,a=0)$, we have $C=10$ almost surely, since $Y<3.5$ almost surely and we never have censored outcomes (see \eqref{eq:cenDistL4}). We have $\tilde{Y} = Y$ and $\Delta = 1$ almost surely. Then for those values of $S=s$ and $A=a$, we can write, by \eqref{eq:oct3-1}
\begin{align*}
    &\E{\psi_{s,a}^{\tn{CDR}} \lvert X}  \\
    = &\E{\frac{\ind{\Delta=1} Y}{\bar{G}_{s,a}(Y \lvert X)} + \frac{\ind{\Delta=0} Q_{s, a}(X,C)}{\bar{G}_{s,a}(C \lvert X)} - \int_{-\infty}^{\tilde{Y}} \frac{Q_{s, a}(X, c)}{\bar{G}_{s,a}^2(c \lvert X)} dG_{s,a}(c \lvert X) \Big\lvert X,S=s,A=a} \\
    = &\E{\frac{Y}{\bar{G}_{s,a}(Y \lvert X)} - \int_{-\infty}^{Y} \frac{Q_{s, a}(X, c)}{\bar{G}_{s,a}^2(c \lvert X)} \delta(c-10) \Big\lvert X,S=s,A=a} \\
    = &\E{\frac{Y}{\bar{G}_{s,a}(Y \lvert X)} \Big\lvert X,S=s,A=a} \\
\end{align*}
where $\delta$ is the Dirac delta function. The integral term disappears since $Y < 10$ almost surely. Since $\bar{G}_{s,a}(t \lvert X) = 1$ for all $t<10$, we can calculate the following
\begin{align*}
    \E{\psi^{\tn{CDR}}_{s=0,a=0} \lvert X} &= 0 \\
    \E{\psi^{\tn{CDR}}_{s=0,a=1} \lvert X} &= 2 \\
    \E{\psi^{\tn{CDR}}_{s=1,a=0} \lvert X} &= 0 
\end{align*}
Next, we calculate $\E{\psi_{s=1,a=1}^{\tn{CDR}} \lvert X}$. Note that in this case, the observations are censored with probability 1/2 (when $Y=4$) and not censored with probability 1/2 (when $Y=0$). We then note the marginal (w.r.t. $Y$) survival function of the censoring time as  
\begin{equation*}
\begin{aligned}
    \bar{G}_{s=1,a=1} (n \lvert X) &=
    \begin{dcases}
        1 & n < \frac{1}{2} \\
        \frac{1}{2} & \frac{1}{2} \leq n < 10 \\
        0 & \tn{otherwise}
    \end{dcases} 
\end{aligned}
\end{equation*}
And the density function as the sum of two Dirac delta functions
\begin{equation} \label{eq:oct3-2}
    dG_{s=1,a=1} (n \lvert X) = \frac{1}{2} \left( \delta(n-\frac{1}{2}) + \delta(n-10) \right)
\end{equation}
We can then write, also utilizing Lemma~\ref{suplem:CateId} and \eqref{eq:poDistL5}
\begin{align*}
    &\E{\psi_{s=1,a=1}^{\tn{CDR}} \lvert X}  \\
    = &\E{\frac{\ind{\Delta=1} Y}{\bar{G}_{s=1,a=1}(Y \lvert X)} + \frac{\ind{\Delta=0} Q_{s=1, a=1}(X,C)}{\bar{G}_{s=1,a=1}(C \lvert X)} - \int_{-\infty}^{\tilde{Y}} \frac{Q_{s=1, a=1}(X, c)}{\bar{G}_{s=1,a=1}^2(c \lvert X)} dG_{s=1,a=1}(c \lvert X) \Big\lvert X,S=1,A=1} \\
    = &\E{\frac{\ind{\Delta=1} Y}{\bar{G}_{s=1,a=1}(Y \lvert X)} + \frac{\ind{\Delta=0} Q_{s=1, a=1}(X,C)}{\bar{G}_{s=1,a=1}(C \lvert X)} - \int_{-\infty}^{\tilde{Y}} \frac{Q_{s=1, a=1}(X, c)}{\bar{G}_{s=1,a=1}^2(c \lvert X)} dG_{s=1,a=1}(c \lvert X) \Big\lvert X,Y=0,S=1,A=1} \\ &\hspace{20pt}\times P(Y(1)=0 \lvert X,S=1) \\
    +&\E{\frac{\ind{\Delta=1} Y}{\bar{G}_{s=1,a=1}(Y \lvert X)} + \frac{\ind{\Delta=0} Q_{s=1, a=1}(X,C)}{\bar{G}_{s=1,a=1}(C \lvert X)} - \int_{-\infty}^{\tilde{Y}} \frac{Q_{s=1, a=1}(X, c)}{\bar{G}_{s=1,a=1}^2(c \lvert X)} dG_{s=1,a=1}(c \lvert X) \Big\lvert X,Y=4,S=1,A=1} \\ 
    &\hspace{20pt}\times P(Y(1)=4 \lvert X,S=1) \\
    = &\underbrace{\E{\frac{Y}{\bar{G}_{s=1,a=1}(Y \lvert X)} \Big\lvert X,Y=0,S=1,A=1}}_{=0} \times \frac{1}{2} \tag{when $Y=0$, we have $C=10$, $\Delta = 1, \tilde{Y} = 0$} \\
    +&\E{\frac{Q_{s=1, a=1}(X,C)}{\bar{G}_{s=1,a=1}(C \lvert X)} - \int_{-\infty}^{\tilde{Y}} \frac{Q_{s=1, a=1}(X, c)}{\bar{G}_{s=1,a=1}^2(c \lvert X)} dG_{s=1,a=1}(c \lvert X) \Big\lvert X,Y=4,S=1,A=1} \times \frac{1}{2} \tag{when $Y=4$, we have $C=\frac{1}{2}$, $\Delta = 0, \tilde{Y} = \frac{1}{2}$} \\
    =&\left( \frac{Q_{s=1, a=1}(X,\frac{1}{2})}{\bar{G}_{s=1,a=1}(\frac{1}{2} \lvert X)} - \int_{-\infty}^{\frac{1}{2}} \frac{Q_{s=1, a=1}(X, c)}{\bar{G}_{s=1,a=1}^2(c \lvert X)} dG_{s=1,a=1}(c \lvert X) \right) \times \frac{1}{2}  \\
    =&\left( \frac{Q_{s=1, a=1}(X,\frac{1}{2})}{\bar{G}_{s=1,a=1}(\frac{1}{2} \lvert X)} - \frac{1}{2}\left(\frac{Q_{s=1, a=1}(X, \frac{1}{2})}{\bar{G}_{s=1,a=1}^2(\frac{1}{2} \lvert X)} \right) \right) \times \frac{1}{2} \tag{by \eqref{eq:oct3-2}} \\
    =&\left( \frac{4}{\frac{1}{2}} - \frac{1}{2}\frac{4}{\left(\frac{1}{2}\right)^2} \right) \times \frac{1}{2}\\
    =&0
\end{align*}
Note that $Q_{s=1, a=1}(X,\frac{1}{2}) = \E{Y \lvert X,Y>\frac{1}{2},S=1,A=1} = 4$ since given $Y>\frac{1}{2}$, $Y=4$ almost surely. We are done since
\begin{align*}
    \E{\psi^{\tn{CDR}} \lvert X} &=\E{ \left(\psi^{\tn{CDR}}_{s=1,a=1} - \psi^{\tn{CDR}}_{s=1,a=0} \right) - \left( \psi^{\tn{CDR}}_{s=0,a=1} - \psi^{\tn{CDR}}_{s=0,a=0} \right) \bigg\lvert X}\\
    &=\left(0 - 0 \right) - \left(2 - 0 \right) = -2 \neq 0 
\end{align*}
\end{proof}
\section{DETAILS ON THE DATASETS USED IN THE EXPERIMENTS}
\label{app:datasets}
\subsection{IHDP Experiments}
\label{app:ihdp}
We use 10 covariates from the IHDP data for $X$, containing both continuous and categorical variables: \texttt{twin}, \texttt{b.head}, \texttt{preterm}, \texttt{momage}, \texttt{bw}, \texttt{b.marr}, \texttt{nnhealth}, \texttt{birth.o}, \texttt{momhisp}, \texttt{sex}. 

\subsubsection{Propensity Score of the Treatment} In the RCT cohort $S=0$ we set $P(A=1 \lvert X, S=0) = 0.5$. That is, the treatment assignment is completely randomized. In the OS cohort, we have
\begin{align*}
    P(A=1 \lvert X,S=1) = \tn{sigmoid} (X^{\intercal} \beta_{\mathrm{prop}} + C)
\end{align*}
where $C \in \{0.5, 0.75, 1, 1.25\}$ and $\beta_{\mathrm{prop}} (\texttt{sex})  = 2 \times C$ depending on the \enquote{strength} of confounding we want to induce. We set $\beta_{\mathrm{prop}} (x) = 0$ for all $x \neq \texttt{sex}$. That is, among the 10 covariates listed above, only the \texttt{sex} covariate has any influence on the treatment assignment. See the \texttt{prop\_args} parameter in \texttt{exp\_configs/ihdp/*.json} for the specific value of $\beta_{\mathrm{prop}}$ in each experimental setup.

The propensity score of the treatment is used to sample a binary treatment $A_i$ for a patient with covariates $x_i$ in the study $s_i$ as 
\[
A_i \sim \texttt{Bernoulli} (P(A=1 \lvert X=x_i, S=s_i))
\]
\subsubsection{Survival Functions of Time-to-Event and Censoring Time} 
Time-to-events $Y$ and censoring times $C$ are simulated using a CoxPH model \citep{cox1972regression} for their survival functions $\bar{F}_{s,a} (t \lvert X)$ \eqref{eq:fbar} and $\bar{G}_{s,a} (t \lvert X)$ \eqref{eq:gbar}, which admits the general form
\begin{equation*}
    \overline{W}_0 (t; \lambda, p)^{\exp(X^{\intercal} \beta_{\mathrm{Cox}})}
\end{equation*}
where
\begin{equation*}
    \overline{W}_0 (t; \lambda, p) =  \exp\left(-(\lambda t)^p \right) \qquad \lambda, p \in \mathbb{R}_+
\end{equation*}
Recall that the parameters $\lambda$, $p$, and $\beta_{\mathrm{Cox}} \in \mathbb{R}^{10}$ are specified separately for each study $S \in \{0,1\}$ and treatment group $A \in \{0,1\}$ pair. The specific values for each parameter can be found in $\texttt{exp\_configs/ihdp/*.json}$ under the corresponding study key (RCT or OS).

To measure type-1 error (see setup \#1 in \Cref{table:cic_synth_exp}), we use identical values for the parameters across the RCT and the OS. Identical parameters are also used in the experiments where we consider the violation of internal validity (see setups \#4 and \#5 in \Cref{table:cic_synth_exp}, and middle column in \Cref{fig:cicsweep}). Only difference is that we conceal the $\texttt{sex}$ covariate to induce unmeasured confounding. Finally, to simulate the violation of the external validity  (see setups \#2 and \#3 in \Cref{table:cic_synth_exp}, and left column in \Cref{fig:cicsweep}), we use different values for $\beta_{\mathrm{Cox}} (\texttt{nnhealth})$ in the treatment $A=1$ groups of the RCT $S=0$ and the OS $S=1$. We experiment with different magnitudes of difference for the parameter, denoted by $\Delta \beta_{\mathrm{Cox}}$ in the main text ({\em e.g.}, see \Cref{table:cic_synth_exp}).
\subsubsection{Estimation of the Nuisance Functions} 
Since we know the exact data-generating process, we are able to correctly specify the models for estimation.

For the selection $P(S=1 \lvert X)$ and propensity $P(A=1 \lvert X,S)$ scores, we fit logistic regression models. To limit the variance of the signals, we drop patients with extreme selection or propensity scores ({\em i.e.}, < 0.05 or > 0.95). For the survival functions $\bar{F}_{s,a} (t \lvert X)$ \eqref{eq:fbar} and $\bar{G}_{s,a} (t \lvert X)$ \eqref{eq:gbar}, we fit a CoxPH model where the baseline survival function $\overline{W}_0 (t; \lambda, p)$ is estimated via the Breslow estimator and $\beta_{\mathrm{Cox}}$ is estimated by fitting Cox's partial likelihood \citep{cox1972regression, davidson2019lifelines}. Finally, we fit an XGboost model for the mean outcome regressors as part of the DR-$\tilde{Y}$ and DR-$Y$ signals (see \Cref{app:baseline_signals}).
\subsection{WHI Experiments}
\label{app:whi}
WHI data is available to all researchers upon request in \url{https://biolincc.nhlbi.nih.gov/studies/whi_ctos/}. We start with 1121 features available at the baseline for both the RCT and the OS cohorts. After removing duplicate and highly correlated (Pearson coefficient > 0.95), we had 954 features. Finally, we transform to 350 principal components (PC), which capture 90\% of the variance, and use those PCs as the set of $X$ variables. THE PC transformation helps with the convergence and instability-related issues in the generalized linear models, specifically the Cox regression model. The estimation of the nuisance functions is done the same way as in the IHDP experiments (see \Cref{app:ihdp}).
\section{ADDITIONAL EXPERIMENTAL RESULTS}
\label{app:experiments_add}
\subsection{Testing the Doubly-Robustness of the CDR Signal} \label{supssec:drtest}
We test the CDR signal's doubly-robust property for setups \#1,2,3 in \Cref{table:cic_synth_exp} where internal validity holds. 

We misspecify $\bar{F}_{S,A} (t \lvert X)$ by setting the baseline survival function $\overline{W}_0 (t; \lambda, p)$ to follow the law of a uniform random variable between the minimum and maximum time-to-event variables, and set $\beta_{\mathrm{Cox}} = 0$. We use the correct (oracle) values for $\bar{G}_{S,A} (t \lvert X)$ and $P(A \lvert X,S)$, and name the resulting signal \texttt{CDR-MissF}.

We then misspecify $\bar{G}_{S,A} (t \lvert X)$ and $P(A \lvert X,S)$ and use the correct values for $\bar{F}_{S,A} (t \lvert X)$, and name the resulting signal \texttt{CDR-MissGP}. The misspecification of $\bar{G}_{S,A} (t \lvert X)$ is done similarly to that of $\bar{F}_{S,A} (t \lvert X)$, by using the minimum and maximum censoring times. We also misspecify $P(A \lvert X,S)$ by setting it to $0.5$ in the OS $S=1$. 

We see in \Cref{table:cdr} that misspecified models maintain low type-1 error and higher power, corroborating the doubly-robustness of the CDR signal.
\begin{table}[h]
\centering

\caption{Rejection rates over 40 runs for misspecified CDR signals. OS size is $n_1 = 2955$. Same setups in \Cref{table:cic_synth_exp} are considered.}

\label{table:cdr}

\begin{tabular}{lccc}
\toprule
\toprule
\textbf{Setup \#} & \textbf{1} & \textbf{2} & \textbf{3} \\ 
\midrule
\textbf{Metric} & \cred{\textbf{Type-1 error}} & \cblue{\textbf{Power}} & \cblue{\textbf{Power}} \\ 
\midrule
CDR-MissF   & 0 & 0.3 & 0.975     \\ 
CDR-MissGP & 0 & 0.35 & 0.975 \\

\bottomrule
\bottomrule
\end{tabular}
\end{table}

\subsection{Witness Function} \label{supssec:witfunc}
As exposed in \citet{hussain2023falsification}, an appealing feature of using MMR-based approach is that we can express the maximizer of $\mathbb{M}$ (\Cref{thm:thmMMR}) as
\begin{align*}
f^* = \arg \sup_{f \in {\cal F}, ||f|| \le 1}\left(\E{\psi f(X)}\right)^2
\end{align*}
in closed form. This maximizer is referred to as the \emph{witness function} and can be estimated as follows:
\begin{align*}
    \hat{f}^{*} = C \frac{1}{n} \sum_i \psi_i k(x_i,x)   
\end{align*}
where $C$ is a constant such that $\int_{\mathcal{X}} f^*(x) dx = 1$.

The witness function reveals the regions of ${\cal X}$ where the \enquote{difference signal} $\psi$ takes on larger values. As such, it can indicate the sub-populations where the CATE (see \eqref{eq:cate}) estimates from the RCT and OS are most discrepant. 

We investigated the witness function using a fully synthetic dataset, where we control the \enquote{source covariates} of discrepancy and ensure limited correlation between them to facilitate the validation of the readings.

For both RCT $S=0$ and OS $S=1$, we generate covariates using one intercept and 10 independent normal variables with variances $\sigma^2 = 1$ and the following means
\begin{align*}
    \mu_{s=0} &= \begin{bmatrix} 0,0,0,0,0,0,0,0,1,0 \end{bmatrix}\\
    \mu_{s=1} &= \begin{bmatrix} 0, -0.1, 0.4, 0, -0.3, 0.15, 0, 0.4, 1, -0.4 \end{bmatrix}\\
\end{align*}
\begin{figure}[htbp]
    \centering
    \includegraphics[width=0.9\linewidth]{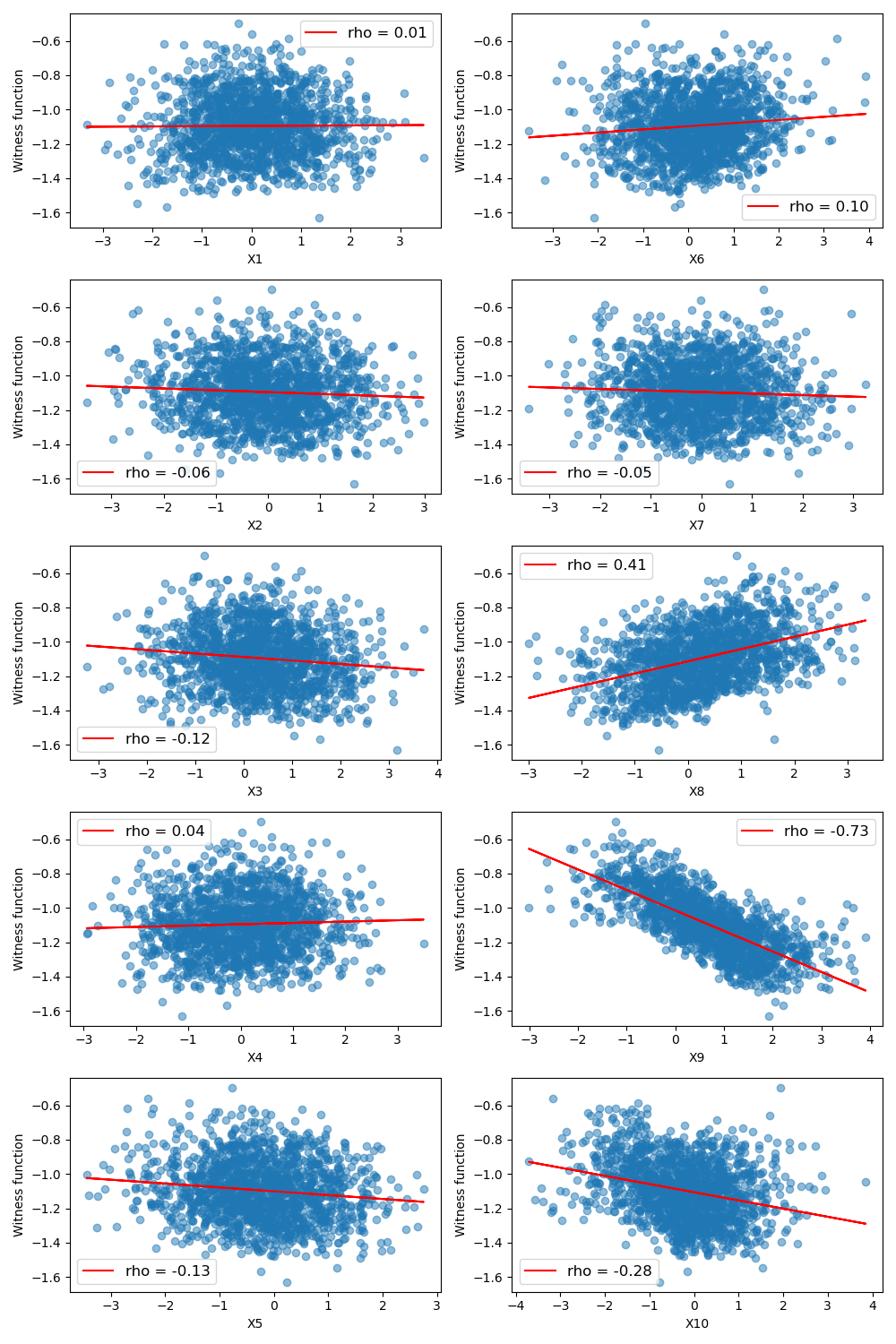}
    \caption{The difference signal $\psi$ (blue) with a line fit to it (red) separately for each covariate.}
    \label{fig:witness_indiv}
\end{figure}
For the OS, we generated the probability of treatment with a logistic regression with parameters $\beta_{\mathrm{prop}}$:
\begin{align*}
    P(A=1 \mid X, S=1) &= \tn{sigmoid}(X^{\intercal} \beta_{\mathrm{prop}})\\
     \tn{where} \quad \beta_{\mathrm{prop}, s=1} &= \begin{bmatrix} -0.7, 0.4, -0.2, 0.3, -0.1, -0.4, 0.2, 0.1, 0.4, -0.8,-0.75\end{bmatrix}
\end{align*}
and for the RCT we have $P(A=1 \lvert X,S=0) =0.5$. We induce different CATE functions in the RCT and OS by using different parameter values for covariates $X_8$, $X_9$, and $X_{10}$ in the CoxPH model for generating $Y(1)$.
\begin{align*}
    \beta_{\mathrm{Cox},S=0,Y(1)} &= \begin{bmatrix} 0, 0.7, -0.4, 0.5, 0.4, -0.5, 0.6, -0.4, 0.5, -1.2, -0.7 \end{bmatrix}.\\
    \beta_{\mathrm{Cox},S=1,Y(1)} &= \begin{bmatrix} 0, 0.7, -0.4, 0.5, 0.4, -0.5, 0.6, -0.4, -0.5, 1.2, 0.7 \end{bmatrix}\\
\end{align*}
The CoxPH model parameters for the potential outcome $Y(0)$ is set to be the same across studies. According to this data generating model, an effective witness function should enable the detection as $X_8$, $X_9$, and $X_{10}$ as culprits for the discrepancy between RCT and OS.

In \Cref{fig:witness_indiv}, we scatter-plot (blue) the individual values of the difference signal $\psi$. The witness function at any point is then a weighted average of those values depending on the specific kernel function $k(\cdot,\cdot)$. We visualize the linear fit (red) to the difference signal function $\psi$ over $X$ for some quick insight. We observe the strongest correlations for variables $X_8$, $X_9$, and $X_{10}$, as expected.

In Figure~\ref{fig:witness_all}, we plot the witness function over each dimension individually (using the same values for the other dimensions, effectively excluding their effect on the result). We repeat the experiment with some additive noise on the corresponding covariate values to obtain uncertainty ranges. We observe that large $X_8$, $X_9$, and $X_{10}$ result in high witness function values, pronouncing the difference in the corresponding $\beta_{\mathrm{Cox}}$ values across studies. The direction of growth indicates the sign of the discrepancy. These experiments confirm that the witness function can be used for identifying a sub-population that exhibits a stronger violation of the validity assumptions. 

However, in practice, the covariates will be correlated, which can hamper the ability of the witness function to pinpoint the variables responsible for the discrepancy. Furthermore, if the discrepancy results from unmeasured confounders, our only hope would be to see some effect through features correlated with the unmeasured confounders. However, useful proxies for the unmeasured confounders should also alleviate the very issue of confounding; therefore, the witness functions' utility is fundamentally limited under unmeasured confounding.
\begin{figure}[htbp]
    \centering
    \includegraphics[width=0.7\linewidth]{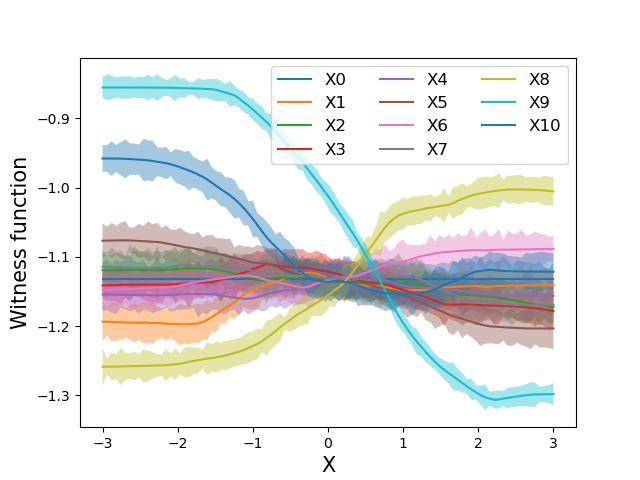}
    \caption{Witness function evaluated over each dimension individually.}
    \label{fig:witness_all}
\end{figure}
\subsection{Hypothesis Testing}
\label{app:test_details}
After calculating the test statistic $\hat{\mathbb{M}}_n^2$, we follow \citet{hussain2023falsification} (see their Appendix E.2) to generate $B=100$ samples from the null distribution $H_0$. Let $(w_{k1},\dots,w_{kn}) \sim \tn{Multinom}(n,(\frac{1}{n},\dots,\frac{1}{n}))$. The $k$-th bootstrap sample of the null is given as
\[
\hat{\mathbb{M}}_{n(k)}^2 = \frac{1}{n^2} \sum_{i,j \in {\cal I}, i \neq j} (w_{ki}-1)\hat{\psi}_i k(x_i, x_j)\hat{\psi}_j (w_{kj}-1)
\]

The $p$-value for the statistic is then calculated as 
\[
t_a = \frac{\left[\sum_{k=1}^{B} \mathbf{1}(\hat{\mathbb{M}}_n^2 \le \hat{\mathbb{M}}_{n(k)}^2)\right]+1}{B+1}
\]
If $t_a < 0.05$, we reject the null $H_0$ and accept it otherwise.

\section{BASELINE SIGNALS}
\label{app:baseline_signals}
We consider four signals without any component to model censoring that are used to construct the baseline tests in the ablation studies: IPW-Y, DR-Y, IPW-$\tilde{Y}$, and DR-$\tilde{Y}$. These signals are the standard inverse propensity weighting and regression-based signals in the literature. Let us start with the latter two. Note that $\psi^{\tn{IPW},\tilde{Y}}$ is already defined in \eqref{eq:ipw-impute-means}. We define
\begin{align*}
    \psi^{\tn{DR},\tilde{Y}}_{s,a} &\coloneqq \frac{1}{P(S=s\lvert X)}\left( \frac{\ind{A=a} \left(\tilde{Y} - \tilde{\mu}_{s,a} (X) \right)}{P(A=a \lvert X,S=s)} +  \tilde{\mu}_{s,a} (X) \right)  \\
    \psi^{\tn{DR},\tilde{Y}}_{s} &\coloneqq \psi^{\tn{DR},\tilde{Y}}_{s,a=1} - \psi^{\tn{DR},\tilde{Y}}_{s,a=0}  \\ 
    \psi^{\tn{DR},\tilde{Y}} &\coloneqq \psi^{\tn{DR},\tilde{Y}}_{s=1} - \psi^{\tn{DR},\tilde{Y}}_{s=0}
\end{align*}
where
\[
 \tilde{\mu}_{s,a} (X) = \E{\tilde{Y} \lvert X,S=s,A=a}
\]
is the mean outcome function for the {\em imputed outcome} $\tilde{Y}$. $\psi^{\tn{IPW},Y}$ and $\psi^{\tn{DR},Y}$ are the conjugates of $\psi^{\tn{IPW},\tilde{Y}}$ and $\psi^{\tn{DR},\tilde{Y}}$, invoked only on the uncensored data ($\Delta = 1$). Precisely,
\begin{align*}
    \psi^{\tn{DR},Y}_{s,a} &\coloneqq \frac{1}{P(S=s\lvert X, \Delta=1)}\left( \frac{\ind{A=a} \left(\tilde{Y} - \tilde{\mu}_{s,a} (X \lvert \Delta=1) \right)}{P(A=a \lvert X,S=s, \Delta=1)} +  \tilde{\mu}_{s,a} (X \lvert \Delta = 1) \right)  \\
    \psi^{\tn{DR},Y}_{s} &\coloneqq \psi^{\tn{DR},Y}_{s,a=1} - \psi^{\tn{DR},Y}_{s,a=0}  \\ 
    \psi^{\tn{DR},Y} &\coloneqq \psi^{\tn{DR},Y}_{s=1} - \psi^{\tn{DR},Y}_{s=0}
\end{align*}

where $\psi^{\tn{IPW},Y}$ is defined similarly and the signals are estimated using only the uncensored data.

\end{document}